\newtheorem{theorem}{Theorem}[section]
\newtheorem{corollary}[theorem]{Corollary}
\newcommand{\IR}{\mathbb{R}}
\newcommand{\IC}{\mathbb{C}}
\newcommand{\row}{\text{rowspace }}
\newcommand{\diag}{\text{diag}}
\def\cS{{\cal S}}
\def\diag{{\rm diag}\,}
\def\tr{{\rm tr}\,}
\begin{document}
\openup .95 \jot
\title{Physical transformations between quantum states}
\author{Zejun Huang, Chi-Kwong Li, Edward Poon, Nung-Sing Sze}
\date{}
\maketitle

\begin{abstract}
Given two sets of quantum states
$\{A_1, \dots, A_k\}$ and $\{B_1, \dots, B_k\}$, represented as sets as density matrices,
necessary and sufficient conditions are obtained for the
existence of a physical transformation $T$, represented as a trace-preserving
completely positive map, such that
$T(A_i) = B_i$ for $i = 1, \dots, k$.
General completely positive maps without the trace-preserving requirement,
and unital completely positive maps transforming the states are also considered.
\end{abstract}

Keywords. Physical transformations, quantum states, completely positive linear maps.

\section{Introduction and Notation}

\subsection{Introduction}

In quantum information science, quantum states with $n$ physically measurable states
are represented by $n\times n$ \emph{density matrices}, i.e., positive semidefinite matrices
with trace one. In particular, \emph{pure} states are rank one
density matrices, while \emph{mixed} states have rank greater than one.
We are interested in studying the conditions on two sets of quantum states
$\{A_1, \dots, A_k\}$ and $\{B_1, \dots, B_k\}$ so that there is a physical
transformation (a.k.a. quantum operation or quantum channel) $T$ such that
$T$ sends $A_i$ to $B_i$ for $i = 1, \dots, k$.

To set up the mathematical framework,
let $M_{m,n}$ be the set of $m\times n$ complex matrices, and use
the abbreviation $M_n$ for $M_{n,n}$.
Denote by $x^*$ and $A^*$ the conjugate transpose of  vectors $x$ and
matrices $A$.
Physical transformations sending quantum states (represented as density matrices)
in $M_n$ to quantum states in $M_m$ are trace-preserving completely positive (TPCP)
maps $T: M_n \rightarrow M_m$ with an operator sum representation
\begin{equation}\label{E:cp}
T(X) = \sum_{j=1}^r F_j X F_j^*,
\end{equation}
where $F_1, \dots, F_r$ are $m\times n$ matrices satisfying $\sum_{j=1}^r F_j^*F_j = I_n$;
see \cite{C}, \cite{K}, \cite[\S8.2.3]{NC}.
So, we are interested in studying the conditions for the existence of
a TPCP map $T$ of the form (\ref{E:cp}) with  $\sum_{j=1}^r F_j^*F_j = I_n$
such that $T(A_i) = B_i$ for $i = 1, \dots, k$.

We also consider more general types of physical transformations
(completely positive (CP) linear maps) without the trace-preserving assumption,
i.e., not requiring $\sum_{j=1}^r F_j^*F_j = I_n$. Such operations are also considered
in the study of quantum information science; see \cite[\S8.2.4]{NC}.
Furthermore, in Section 4 we consider \emph{unital} completely positive maps
which are of interest in the theory of $C^*$-algebras.  Such CP maps are dual to the trace-preserving ones and send the identity matrix to the identity matrix, i.e., they satisfy $\sum_{j=1}^r F_j F_j^* = I_m$. 

In Section 2, we study physical transformations on qubit states, i.e., quantum states on $M_2$.
Section 3 concerns physical transformations sending general states to general states,
and Section 4 concerns more general transformations acting on pure states.

\subsection{Notation}

We conclude this section by defining additional notation and recalling some terminology that will be used later.
Given a matrix $M$ (which we may alternatively denote as $(M_{ij})$, to focus on its entries), we write $M^t$ for the transpose of $M$, and $\bar{M}$ for the matrix whose $(i,j)$-entry is the complex conjugate of $M_{ij}$.
The \emph{Hadamard product} (or \emph{Schur product}) of two $m \times n$ matrices $A$ and $B$ is the $m \times n$ matrix $A \circ B$ whose $(i,j)$-entry is given by $A_{ij} B_{ij}$.  (So, the $\circ$ symbol denotes entry-wise multiplication.)
A \emph{correlation} matrix is a positive semidefinite matrix with all diagonal entries equal to 1.

Suppose a matrix $A$ has the spectral decomposition $A = \sum_{k=1}^m \lambda_k v_k v_k^*$ for some orthonormal eigenvectors $v_k$.  One possible \emph{purification} for $A$ is the vector $\sum_{k=1}^m \sqrt{\lambda_k} v_k \otimes v_k$; the most general form for purifications of $A$ are vectors of the form $\phi = \sum_{k=1}^m \sqrt{\lambda_k} v_k \otimes W v_k \in \IC^m \otimes \IC^r$, where $W$ is a partial isometry from $\IC^m$ to $\IC^r$.
Note that, for any purification $\phi$ of $A$, the partial trace of $\phi \phi^*$ over the second system is precisely $A$, and one can actually take a more abstract point of view and define purifications to be those vectors possessing this property.  (Recall that the partial trace of $B \otimes C \in M_m \otimes M_r$ over the second system is just $B (\tr C)$, and one extends linearly to define the partial trace on all of $M_m \otimes M_r$.)

\section{Qubit states}
\medskip

In this section we focus solely on qubit states ($2 \times 2$ density matrices). Recall that the trace norm $\| \cdot \|_1$ of a matrix $X$ is the sum of its singular values.
The following interesting result was proved in \cite{AU}; see also \cite{CJW}.

\begin{theorem} \label{2.1}
Let $A_1, A_2, B_1, B_2 \in M_2$ be density matrices.
There is a TPCP map sending $A_i$ to $B_i$ for $i = 1, 2$ if and only if
$\|A_1-tA_2\|_1  \ge \|B_1 - tB_2\|_1$ for all $t \ge 0$.
\end{theorem}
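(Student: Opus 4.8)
The plan is to treat the two directions separately, with necessity being routine and sufficiency carrying essentially all the difficulty.

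For \emph{necessity} I would use the fact that every trace-preserving positive (in particular, every TPCP) map $T$ is contractive in the trace norm on the real space of Hermitian matrices. Indeed, writing a Hermitian $X$ via its Jordan decomposition $X = X_+ - X_-$ into orthogonal positive and negative parts, positivity gives $T(X_\pm)\ge 0$ and trace preservation gives $\tr T(X_\pm) = \tr X_\pm$, so that $\|T(X)\|_1 \le \|T(X_+)\|_1 + \|T(X_-)\|_1 = \tr X_+ + \tr X_- = \|X\|_1$. Since $A_1 - tA_2$ is Hermitian for every real $t$ and $T(A_1 - tA_2) = B_1 - tB_2$ by linearity, this at once gives $\|A_1 - tA_2\|_1 \ge \|B_1 - tB_2\|_1$, in particular for all $t\ge 0$. (By homogeneity of $\|\cdot\|_1$ the one-parameter family over $t\ge 0$ already captures every nonnegative combination $\|sA_1 - tA_2\|_1$, so nothing is lost by restricting to this family.)

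For \emph{sufficiency} I would first exploit the invariance of the whole statement under $(A_i, B_i)\mapsto (UA_iU^*, VB_iV^*)$ for unitaries $U,V$: both the hypothesis (by unitary invariance of $\|\cdot\|_1$) and the conclusion (by conjugating a candidate $T$ on the two sides) are unchanged, so I may place $A_1, A_2$ and, independently, $B_1, B_2$ in convenient canonical position. The achievable pairs $\{(T(A_1), T(A_2)) : T \text{ TPCP}\}$ form a convex, compact subset of the pairs of Hermitian matrices, so $(B_1, B_2)$ fails to be achievable exactly when some Hermitian pair $(Y_1, Y_2)$ strictly separates it, i.e.\ $\tr(Y_1 B_1) + \tr(Y_2 B_2)$ exceeds $\sup_T\,[\tr(Y_1 T(A_1)) + \tr(Y_2 T(A_2))]$; via the Choi matrix the latter supremum is a semidefinite program in the channel. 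As a guide for what the separators should look like, I would first run the pure-state case explicitly in the Bloch picture: for $A_i = \tfrac12(I + \vec a_i\cdot\vec\sigma)$ of rank one the trace norm of the pencil is $\|A_1 - tA_2\|_1 = \sqrt{(1-t)^2 + 4t(1 - |\langle a_1|a_2\rangle|^2)}$, so the entire family of inequalities collapses to the single condition $|\langle a_1|a_2\rangle| \le |\langle b_1|b_2\rangle|$ — the outputs must be no more distinguishable than the inputs.

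The heart of the argument — and the step I expect to be the main obstacle — is to show that the general separators produce no condition stronger than the scalar inequalities, a collapse genuinely special to $2\times 2$. Concretely, I would try to prove that the worst-case $(Y_1, Y_2)$ may be taken in the one-parameter form $Y_1 = Q$, $Y_2 = -tQ$ with $t\ge 0$ and $Q$ a $\pm 1$ spectral symmetry ($Q = Q^*$, $Q^2 = I$); for such separators $\tr(Q(B_1 - tB_2)) \le \sup_T \tr(Q\,T(A_1 - tA_2)) = \|A_1 - tA_2\|_1$ reduces, after optimizing $Q$, to exactly $\|B_1 - tB_2\|_1 \le \|A_1 - tA_2\|_1$, so the hypothesis rules out every separator and forces $(B_1, B_2)$ into the achievable set. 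Equivalently, in the constructive picture, one must exhibit explicit Kraus operators (assembled from purifications of the $A_i$ and $B_i$ as set up in Section 1.2) whose completion to a trace-preserving family is obstructed by nothing beyond these scalar inequalities, and the delicate point is guaranteeing \emph{complete} positivity rather than mere positivity. I would handle the degenerate configurations (pure $A_i$ or $B_i$, or a linearly dependent pencil) by separate but easier direct constructions, the pure-state computation above already isolating the relevant scalar criterion in those cases.
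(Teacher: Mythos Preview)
Your necessity argument is correct and matches the paper's (a)$\Rightarrow$(b) step verbatim: Jordan decomposition plus trace preservation gives $\|T(X)\|_1\le\|X\|_1$ for Hermitian $X$.

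For sufficiency, however, there is a genuine gap. Your entire argument rests on the claim that the extremal separating functionals $(Y_1,Y_2)$ for the convex set $\{(T(A_1),T(A_2)):T\ \text{TPCP}\}$ may be taken in the one-parameter form $(Q,-tQ)$ with $Q^*=Q$, $Q^2=I$, $t\ge 0$. You identify this as ``the main obstacle'' and say you ``would try to prove'' it, but you give no mechanism for doing so --- no normal form for the extreme $(Y_1,Y_2)$, no computation of the dual SDP value $\sup_T[\tr Y_1T(A_1)+\tr Y_2T(A_2)]$ for a generic Hermitian pair, no use of any $2\times 2$-specific structure. That reduction \emph{is} the Alberti--Uhlmann theorem: once it holds, sufficiency is a tautology; without it, nothing has been shown. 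The alternative ``constructive picture'' you sketch in the last paragraph likewise stops exactly at the hard step (ensuring the Kraus family completes to a trace-preserving one), so neither route has been carried past the essential difficulty.

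The paper's approach is quite different and avoids convex separation entirely. It first observes that by passing from $(A_1,A_2)$ to suitable real linear combinations one may assume both $A_i$ are \emph{rank one}, $A_i=x_ix_i^*$ --- a reduction you do not make and which is what enables everything that follows. The paper then inserts an intermediate fidelity condition~(c), $|x_1^*x_2|\le\|\sqrt{B_1}\sqrt{B_2}\|_1$, and proves (b)$\Rightarrow$(c) by an explicit determinant computation in $M_2$ (splitting on whether $B_1,B_2$ are invertible and choosing a single well-placed value of $t$), and (c)$\Rightarrow$(a) by writing down purifications $y,z$ of $B_1,B_2$ with $|y^*z|\ge|x_1^*x_2|$, matching Gram matrices of two $8\times 2$ blocks, and reading off Kraus operators from the connecting unitary. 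In short: the paper trades your dual/separation argument for a direct primal construction, and the key move that makes it tractable is the rank-one reduction on the input side, which collapses the whole family of trace-norm inequalities to a single fidelity inequality.
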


The proof in \cite{AU}  is quite long. In the following we give a short proof of the
result, and give another condition that is much easier to check (condition (c) in
Theorem \ref{T:qubit}) by making the following reduction: if rank $A_1 = 2$,
then we can find $c > 0$ so that $\tilde{A_1} = A_1 - cA_2$ is a positive semidefinite
matrix of rank one.  Then we simply replace $A_1, B_1$ by $\tilde{A_1}$, $\tilde{B_1}
= B_1 - c B_2$, since a TPCP map sending $A_i$ to $B_i$ exists if and only if there is
a TPCP map sending $\tilde{A_1}$ to $\tilde{B_1}$ and $A_2$ to $B_2$.
We may then repeat the process by considering $\tilde{A_2} = A_2 - \tilde c \tilde{A_1}$.

So, by taking linear combinations of $A_1, A_2$ (and the corresponding combinations of $B_1$, $B_2$),
we may assume that $A_1 = x_1x_1^*$ and $A_2 = x_2x_2^*$.
We have the following.

\begin{theorem} \label{T:qubit}  Let $A_1 = x_1x_1^*, A_2 = x_2x_2^*, B_1, B_2 \in M_2$
be density matrices. The following conditions are equivalent.

\medskip
{\rm (a)} There is a TPCP map sending $A_i$ to $B_i$ for $i = 1, 2$.

{\rm (b)} $\sqrt{(1+t)^2 - 4t|x_1^*x_2|^2 }=
\|A_1-tA_2\|_1  \ge \|B_1 - tB_2\|_1$ for all $t \ge 0$.

{\rm (c)} $|x_1^*x_2| = \|\sqrt{A_1}\sqrt{A_2}\|_1 \le \|\sqrt{B_1}\sqrt{B_2}\|_1$.
\end{theorem}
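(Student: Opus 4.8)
The plan is to reduce everything to explicit computations of trace norms of $2\times 2$ Hermitian matrices, which are controlled entirely by their trace and determinant. Since $A_1, A_2$ are rank-one density matrices, $x_1, x_2$ are unit vectors, $\sqrt{A_i} = A_i$, and $\sqrt{A_1}\sqrt{A_2} = (x_1^*x_2)\, x_1 x_2^*$ has trace norm $|x_1^*x_2|$; this gives the left equality in (c). For the left equality in (b), I would compute the eigenvalues of the Hermitian matrix $A_1 - tA_2$ from its trace $1-t$ and its determinant. Writing $A_1 - tA_2 = [x_1\ x_2]\,\diag(1,-t)\,[x_1\ x_2]^*$ shows $\det(A_1 - tA_2) = -t(1-|x_1^*x_2|^2) \le 0$, so the two eigenvalues have opposite signs and $\|A_1 - tA_2\|_1 = \sqrt{(1-t)^2 + 4t(1-|x_1^*x_2|^2)} = \sqrt{(1+t)^2 - 4t|x_1^*x_2|^2}$. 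Combined with Theorem \ref{2.1}, this already yields the equivalence of (a) and the inequality in (b).

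The heart of the matter is the equivalence of (b) and (c). First I would record two general facts for qubit density matrices $P, Q$: the $2\times 2$ determinant expansion gives $\det(P - tQ) = \det P - t\,(1 - \tr(PQ)) + t^2\det Q$, and a Hermitian $2\times 2$ matrix with trace $\tau$ and determinant $d$ has trace norm $\sqrt{\tau^2 - 4d}$ when $d \le 0$ and $|\tau|$ when $d \ge 0$. I would also derive the qubit fidelity identity $\|\sqrt{P}\sqrt{Q}\|_1^2 = \tr(PQ) + 2\sqrt{\det P \det Q}$: writing $M = \sqrt{Q}\,P\sqrt{Q}$, one has $\|\sqrt P\sqrt Q\|_1 = \tr\sqrt{M}$, and for a $2\times2$ positive semidefinite matrix $(\tr\sqrt M)^2 = \tr M + 2\sqrt{\det M} = \tr(PQ) + 2\sqrt{\det P\det Q}$.

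With these in hand, the inequality in (b) reads $\|B_1 - tB_2\|_1^2 \le (1+t)^2 - 4t|x_1^*x_2|^2$ for all $t \ge 0$; the case $t = 0$ is automatic, and for $t > 0$ this is equivalent to $|x_1^*x_2|^2 \le h(t) := \big((1+t)^2 - \|B_1 - tB_2\|_1^2\big)/(4t)$. Thus (b)$\iff$(c) will follow once I show $\inf_{t>0} h(t) = \|\sqrt{B_1}\sqrt{B_2}\|_1^2$. On the set where $\det(B_1 - tB_2) > 0$ one has $\|B_1 - tB_2\|_1 = |1-t|$, so $h(t) = 1$, which never drops below the fidelity square (always $\le 1$). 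On the complementary set $\det(B_1 - tB_2) \le 0$, substituting the determinant expansion collapses $h$ to the clean form $h(t) = \tr(B_1 B_2) + \det(B_1)/t + t\det(B_2)$, minimized over $t > 0$ by AM--GM at $t^\ast = \sqrt{\det B_1/\det B_2}$ with value $\tr(B_1B_2) + 2\sqrt{\det B_1\det B_2}$, exactly the fidelity square.

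The main obstacle—and the step requiring care—is confirming that this minimizer $t^\ast$ actually lies in the region $\det(B_1 - tB_2)\le 0$ where that formula for $h$ is valid (and separately handling the degenerate rank-one cases $\det B_1 = 0$ or $\det B_2 = 0$, where the infimum is approached as $t \to 0^+$ or $t \to \infty$). Evaluating $\det(B_1 - t^\ast B_2)$ shows the required inequality reduces to $\tr(B_1B_2) + 2\sqrt{\det B_1\det B_2} \le 1$, i.e.\ to the fidelity being at most $1$, which always holds. This pins down $\inf_{t>0}h(t) = \|\sqrt{B_1}\sqrt{B_2}\|_1^2$ and, taking nonnegative square roots, completes (b)$\iff$(c).
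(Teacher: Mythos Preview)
Your argument is correct but takes a genuinely different route from the paper's. The main structural difference is that you invoke Theorem~\ref{2.1} (Alberti--Uhlmann) for (a)$\Leftrightarrow$(b), whereas the paper's stated purpose here is precisely to \emph{reprove} that theorem: it proves (a)$\Rightarrow$(b) directly from trace-norm contractivity of TPCP maps, and closes the cycle with an explicit Stinespring-type construction of the map in (c)$\Rightarrow$(a). So your proof is logically valid but forfeits the self-containedness that motivated the result. On the other hand, your treatment of (b)$\Leftrightarrow$(c) is tidier than the paper's. By writing $h(t)=\tr(B_1B_2)+\det B_1/t+t\det B_2$ on the set where $\det(B_1-tB_2)\le 0$ (equivalently, $h(t)=\min\{1,\,\tr(B_1B_2)+\det B_1/t+t\det B_2\}$ globally) and minimizing via AM--GM together with the qubit fidelity identity $\|\sqrt{B_1}\sqrt{B_2}\|_1^2=\tr(B_1B_2)+2\sqrt{\det B_1\det B_2}$, you get both directions at once and avoid the paper's case split (one $B_i$ rank-one versus both invertible) and its unitary diagonalization of $B_1-tB_2$. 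Notably, the paper's seemingly ad hoc choice ``$\det(B_1)=\det(tB_2)$'' is exactly your AM--GM minimizer $t^\ast=\sqrt{\det B_1/\det B_2}$, so the two computations converge on the same point; your approach simply explains \emph{why} that $t$ is optimal. The price you pay is that, lacking the explicit (c)$\Rightarrow$(a) construction, you do not exhibit the TPCP map, which the paper does.
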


Note that condition (c) is of independent interest, for it relates the fidelity between the initial states with the fidelity $\| \sqrt{B_1} \sqrt{B_2} \|_1$ between the final states $B_1$, $B_2$, and can be generalized to give a necessary (but not sufficient) condition for the existence of a TPCP map sending $k$ initial states to $k$ final states (see equation \eqref{E:fidelity} later, also \cite{CJW}).

\it Proof. \rm Note that for $X \in M_2$, $\|X\|_1^2 = \tr(XX^*) + 2|\det(X)|$.
One can readily verify the first equality in (b) and the first equality in (c).

\medskip
(a) $\Rightarrow$ (b). Suppose $T$ is TPCP. If $A = A_+ - A_-$ where $A_+$ and $A_-$ are positive semidefinite,
then
$$\|T(A)\|_1 \le \|T(A_+)\|_1 + \|T(A_-)\|_1 =
\tr T(A_+) + \tr T(A_-) = \tr A_+ + \tr A_- = \|A\|_1.$$
Thus $\|B_1-tB_2\|_1 = \|T(A_1-tA_2)\|_1 \le  \|A_1 - tA_2\|_1$ for all $t \ge 0$.

\medskip
(b) $\Rightarrow$ (c).
Suppose one of the matrices $B_1$ and $B_2$  has rank 1.
Without loss of generality, we may assume that $B_2 = y_2 y_2^*$.
By condition (b), for $t > y_2^*B_1y_2$, we have
\begin{eqnarray*}
(1+t)^2 - 4t|x_1^*x_2|^2
&\ge& \|B_1 - t y_2y_2^*\|_1^2 \cr
&=&  \tr((B_1 - ty_2y_2^*)^2) + 2 |\det(B_1 - ty_2y_2^*)| \cr
&=&  t^2 + 2t - 4 t(y_2^*B_1y_2) + \gamma
\end{eqnarray*}
for a constant $\gamma \in \IR$.
Thus, $|x_1^*x_2|^2 \le y_2^*B_1y_2 = \|\sqrt{B_1}\sqrt{B_2}\|_1^2$.

\medskip
Suppose both $B_1$ and $B_2$ are invertible.
Choose $t$ so that $\det(B_1) = \det(tB_2)$.
Applying a suitable unitary similarity transform, we may assume that
$B_1-tB_2$ is in diagonal form so that
$$B_1 = \begin{bmatrix} b_1  & c \cr \bar c & 1-b_1 \end{bmatrix},
\ B_2 = \begin{bmatrix} b_2  & c/t \cr \bar c/t & 1-b_2 \end{bmatrix}.$$
Then
\begin{eqnarray*}
&&\det(B_1+tB_2) - |\det(B_1-tB_2)|\cr
&=& [(b_1+tb_2)(1+t - b_1-tb_2) - 4|c|^2] - (b_1-tb_2)((b_1-tb_2)-(1-t)) \cr
&=& 2\{b_1(1-b_1)-|c|^2) + t^2(b_2(1-b_2) - |c|^2/t^2)\} \cr
&=& 2 (\det(B_1) + \det(tB_2)) \cr
&=& 4 \det(\sqrt{B_1}\sqrt{tB_2})  \qquad \mbox{ because $t$ satisfies } \ \det(B_1) = \det(tB_2)\cr
&=& 4t \det(\sqrt{B_1}\sqrt{B_2}).
\end{eqnarray*}
Hence,
\begin{equation} \label{eq1}
\det(B_1+tB_2) - |\det(B_1-tB_2)| = 4 t \det(\sqrt{B_1}\sqrt{B_2}).
\end{equation}
By condition (b), we have
\begin{eqnarray*}
&&(1+t)^2 - 4t|x_1^*x_2|^2\cr
&\ge& \tr((B_1 - tB_2)^2) + 2|\det(B_1-tB_2)|\cr
& = &  \tr((B_1 + tB_2)^2)- 2t\tr(B_1B_2+B_2B_1) \cr
&& \quad  + 2\det(B_1+tB_2) - 2\det(B_1+tB_2) + 2|\det(B_1-tB_2)| \cr
& = & (\tr(B_1+tB_2))^2 - 4t\tr(B_1B_2) - 2\det(B_1+tB_2)+2|\det(B_1-tB_2)|\cr
& = & (1+t)^2 - 4t\left[\tr(B_1B_2) + 2\det(\sqrt{B_1}\sqrt{B_2})\right]
\qquad \mbox{ by (\ref{eq1}) } \cr
& = & (1+t)^2 - 4t\|\sqrt{B_1}\sqrt{B_2}\|_1^2.
\end{eqnarray*}
Thus, $\|\sqrt{B_1}\sqrt{B_2}\|_1 \ge |x_1^*x_2|$, and condition (c) holds.

\medskip
(c) $\Rightarrow$ (a).  Note that $\|X\|_1 = \max\{|\tr XW|: W \ \hbox{is unitary}\},$ so there exists a unitary $V \in M_2$ such that $|\tr \sqrt{B_1} \sqrt{B_2} V| \geq |x_1^* x_2|$.  If we write $\sqrt{B_1} = [y_1 | y_2]$ and
$\sqrt{B_2}V = [z_1 | z_2]$, and set $y = \begin{bmatrix} y_1 \\ y_2 \end{bmatrix} \in \IC^4$ and
$z = \begin{bmatrix} z_1 \\ z_2 \end{bmatrix} \in \IC^4$, then this inequality implies that $|y^*z| \geq |x_1^* x_2|$.
Set $\delta = 1$ if $y^*z = 0$; otherwise let $\delta = (x_1^* x_2)/(y^*z)$.  Then the $8 \times 2$ matrices
$$X = \begin{bmatrix} x_1 & x_2 \\ 0 & 0 \end{bmatrix} \quad \text{and} \quad Y = \begin{bmatrix} y & \delta z \\ 0 & \sqrt{1 - |\delta|^2} \, z \end{bmatrix}$$
satisfy $X^*X = Y^*Y$ (note that $y_1 y_1^* + y_2 y_2^* = B_1$ and $z_1 z_1^* + z_2 z_2^* = B_2$, so taking the trace of these equations shows that $y$ and $z$ are unit vectors), so there exists a unitary $U$ such that $UX=Y$.  Regard the first two rows of $U^*$ as $[F_1^* F_2^* F_3^* F_4^*]$.
Then the map
$$X \mapsto F_1XF_1^* + \cdots + F_4XF_4^*$$
is the desired TPCP map.
\qed

\medskip\noindent
\textbf{Remark.} Consider the problem of the existence of a TPCP map
$T$ such that $T(A_i) = B_i$ for $i = 1, \dots, k$,  for given density matrices
$A_1, \dots, A_k, B_1, \dots, B_k \in M_2$.
Evidently, we can focus on the case when $\{A_1, \dots, A_k\}$ is a linearly
independent set. If $k = 1$, then the map defined by $T(X) = (\tr X)B_1$
is a TPCP map satisfying the desired condition.
Theorems \ref{2.1} and \ref{T:qubit} provide conditions for the existence of the desired
TPCP map when $k = 2$.
If $k = 4$, then $\{A_1, \dots, A_4\}$ is a basis for $M_2$. There is a unique
linear map $T$ satisfying $T(A_i) = B_i$ for $i = 1, \dots, 4$.
It is then easy to determine whether $T$ is TPCP by considering its action on the standard
basis $\{E_{11}, E_{12}, E_{21}, E_{22}\}$ for $M_2$.  One simply checks whether
$\tr(E_{11}) = \tr(E_{22}) = 1$, $\tr(E_{12}) = \tr(E_{21}) = 0$, and whether the Choi matrix
$$\begin{bmatrix} T(E_{11}) & T(E_{12}) \cr T(E_{21}) & T(E_{22})\cr\end{bmatrix}$$
is positive semidefinite; see \cite{C}. The remaining case is when $k = 3$.
Again, we can replace $A_1, A_2, A_3$ by suitable linear combinations (and apply the
same linear combinations to $B_1, B_2, B_3$ accordingly) and assume that
$A_i = x_ix_i^*$ for $i = 1, 2, 3$. We have the following result.

\begin{theorem} Suppose $A_i = x_ix_i^*, B_i \in M_2$ are density matrices for $i = 1,2,3$
such that $A_1, A_2, A_3$ are linearly independent.
Let $x_3 = \alpha_1e^{it_1}x_1 + \alpha_2e^{it_2}x_2$ with $\alpha_1, \alpha_2 > 0$,
$t_1, t_2 \in [0, 2\pi)$, and
$$\tilde{B}_3 = \frac{1}{2\alpha_1 \alpha_2}(B_3 - \alpha_1^2 B_1 - \alpha_2^2 B_2).$$
Then there is a TPCP map sending $x_i x_i^*$ to $B_i$ for $i=1,2,3$ if and only if
there exists $C \in M_2$ such that
\begin{equation}\label{eq2}
\tr(CC^*) = 1 + |\det(C)|^2 \le 2, \ \tr \sqrt{B_2} C \sqrt{B_1} = e^{i(t_2-t_1)}x_1^* x_2,
\ \hbox{ and } \ \tilde B_3 = \text{Re} \sqrt{B_2} C \sqrt{B_1}.
\end{equation}
\end{theorem}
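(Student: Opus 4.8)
The plan is to reduce the existence of the TPCP map to a condition on a single ``cross term'' and then to recognize the matrix $C$ as encoding the remaining freedom in the channel. Since $A_1,A_2,A_3$ are linearly independent and $\IC^2$ is two-dimensional, $x_3$ lies in the span of $x_1,x_2$ and these two vectors are themselves linearly independent; hence $\{x_1x_1^*,\,x_2x_2^*,\,x_1x_2^*,\,x_2x_1^*\}$ is a basis of $M_2$. A completely positive (in particular Hermitian-preserving) linear map $T$ is therefore determined by $B_1=T(x_1x_1^*)$, $B_2=T(x_2x_2^*)$ and the cross term $Z:=T(x_1x_2^*)$, with $T(x_2x_1^*)=Z^*$. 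First I would expand $A_3=x_3x_3^*$ and apply $T$; comparing with $B_3$ and using the definition of $\tilde B_3$ shows that $T(A_3)=B_3$ is equivalent to the single matrix equation $\tilde B_3=\operatorname{Re}\!\big(e^{i(t_1-t_2)}Z\big)$. Thus the problem becomes: for which $Z$ is there a TPCP map with the prescribed values $B_1,B_2$ on the diagonal and $Z$ on the cross term?

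The key reduction is an operator-sum/purification description of the admissible $Z$. Writing $T(X)=\sum_j F_jXF_j^*$ and letting $R_i$ be the matrix whose columns are $F_jx_i$, one checks that $R_iR_i^*=B_i$, that $Z=R_1R_2^*$, and that trace-preservation $\sum_j F_j^*F_j=I$ is equivalent to the Gram condition $\tr(R_1^*R_2)=x_1^*x_2$ (the equalities $\tr B_i=1$ holding automatically). Conversely, given any $R_1,R_2$ with $R_iR_i^*=B_i$ and $\tr(R_1^*R_2)=x_1^*x_2$, the operators $F_j=[\,(R_1)_{\cdot j}\ (R_2)_{\cdot j}\,]\,[\,x_1\ x_2\,]^{-1}$ define a TPCP map realizing $B_1,B_2,Z$, because the Gram matrix of their columns matches $[\,x_1\ x_2\,]^*[\,x_1\ x_2\,]$. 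I would then factor $R_i=\sqrt{B_i}\,W_i$ with $W_i$ a partial isometry onto the range of $B_i$, set $C_0=W_1W_2^*$ (so that $Z=\sqrt{B_1}\,C_0\sqrt{B_2}$), and define $C=e^{i(t_2-t_1)}C_0^*$. Using $\operatorname{Re}(M)=\operatorname{Re}(M^*)$ and cyclicity of the trace, the three displayed conditions translate exactly: the identity $Z=\sqrt{B_1}C_0\sqrt{B_2}$ turns $\tilde B_3=\operatorname{Re}(e^{i(t_1-t_2)}Z)$ into $\tilde B_3=\operatorname{Re}(\sqrt{B_2}C\sqrt{B_1})$, the Gram condition becomes $\tr(\sqrt{B_2}C\sqrt{B_1})=e^{i(t_2-t_1)}x_1^*x_2$, and, since $C_0$ is a product of partial isometries, $C$ is automatically a contraction, $\|C\|_{\rm op}\le1$.

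It remains to match ``$C$ is a contraction'' with the stated normalization $\tr(CC^*)=1+|\det C|^2\le2$. A short singular-value computation shows that for $C\in M_2$ with singular values $s_1\ge s_2$ one has $\tr(CC^*)-1-|\det C|^2=-(1-s_1^2)(1-s_2^2)$, so the equality, together with $\tr(CC^*)\le2$, holds precisely when the largest singular value equals $1$. Hence I must upgrade the contraction produced above to one lying on the unit sphere. The plan here is an intermediate-value argument: when $B_1,B_2$ are invertible the map $C\mapsto\sqrt{B_2}C\sqrt{B_1}$ is a linear bijection, and fixing its Hermitian part ($=\tilde B_3$) and its trace leaves a positive-dimensional affine family of admissible $C$ (one checks the needed consistency $\tr\tilde B_3=\operatorname{Re}(e^{i(t_2-t_1)}x_1^*x_2)$ holds automatically from $\tr A_3=1$). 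On this unbounded affine set the operator norm is continuous and unbounded, so if the set meets the closed unit ball it meets the unit sphere; this furnishes a $C$ with largest singular value $1$, while conversely any such $C$ is in particular a contraction and yields a TPCP map by the construction above.

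The step I expect to be the main obstacle is the construction of the partial isometries $W_1,W_2$ (equivalently $R_1,R_2$) realizing a prescribed contraction, especially when $B_1$ or $B_2$ is a pure state. In that case $\sqrt{B_i}$ has rank one, the map $C\mapsto\sqrt{B_2}C\sqrt{B_1}$ is no longer injective, and one must first replace $C_0$ by its compression $P_1C_0P_2$ to the ranges of $B_1,B_2$ (which alters neither $Z$ nor the trace and real-part conditions) before factoring it as $W_1W_2^*$ with the correct range projections. Verifying that every such compressed contraction admits such a factorization, and that the affine-family and intermediate-value argument survives the loss of injectivity, is where the care is needed.
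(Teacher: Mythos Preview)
Your plan is essentially the same as the paper's: reduce to the cross term $Z=T(x_1x_2^*)$, encode $Z$ via a contraction $C$ through the factorizations $R_i=\sqrt{B_i}\,W_i$, verify the trace and real-part identities, and then upgrade the contraction to one with $\|C\|_{\rm op}=1$ by perturbing inside the affine subspace of $C$'s satisfying the two linear constraints.  The paper carries out exactly this perturbation by considering $\cS=\{X:\operatorname{Re}\sqrt{B_2}X\sqrt{B_1}=0,\ \tr\sqrt{B_2}X\sqrt{B_1}=0\}$ and replacing $C$ by $C+X$ with $X\in\cS$ so that $\|C+X\|=1$; your intermediate-value argument is the same step.

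The one place where you make life harder than necessary is the converse, which you flag as the main obstacle.  You do not need to factor a prescribed contraction $C_0$ as $W_1W_2^*$ with $W_i$ partial isometries onto the ranges of $B_i$ (and indeed this need not be possible: if $B_1=B_2=\tfrac12 I$ and $C_0=\diag(1,0)$, no such unitary factorization exists).  Instead, given a contraction $C$ satisfying the two linear conditions, simply take
\[
R_1=\bigl[\ \sqrt{B_1}\ \big|\ 0\ \bigr],\qquad
R_2=\bigl[\ \sqrt{B_2}\,C^*e^{i(t_1-t_2)}\ \big|\ \sqrt{B_2}\sqrt{I-C^*C}\ \bigr].
\]
Then $R_iR_i^*=B_i$ and $\tr R_1^*R_2=x_1^*x_2$ hold directly, and your own construction $F_j=[\,(R_1)_{\cdot j}\ (R_2)_{\cdot j}\,][x_1\ x_2]^{-1}$ produces the TPCP map with $T(x_3x_3^*)=B_3$.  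This is exactly the paper's explicit unitary-dilation argument, and it works uniformly whether or not $B_1,B_2$ are singular, so the degenerate-rank analysis you anticipated is unnecessary.
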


\begin{proof} First, consider the forward implication.
Note that  $T$ is a TPCP map sending $x_i x_i^*$ to $B_i$ for $i=1,2$
if and only if $|x_1^* x_2| \leq \|\sqrt{B_1}\sqrt{B_2}\|_1$.
If we write $T(X) = \sum_{j=1}^r F_j X F_j^*$ and $F_j x_i = y_{ij}$, note that $Y_i = \begin{bmatrix} y_{i1} & \dots & y_{ir} \end{bmatrix}$ must equal $\sqrt{B_i} W_i^*$ for some isometry $W_i \in M_{rm}$. Writing $\text{Re } A = (A+A^*)/2$, we have
\begin{align*}
T(x_3 x_3^*) &= \sum_{j=1}^r F_j x_3 x_3^* F_j^* = \sum_{j=1}^r \left( \alpha_1^2 y_{1j} y_{1j}^* + \alpha_2^2 y_{2j} y_{2j}^* + 2 \text{Re} \alpha_1 \alpha_2 e^{i(t_2-t_1)} y_{2j} y_{1j}^* \right) \\
&= \alpha_1^2 B_1 + \alpha_2^2 B_2 + 2\alpha_1 \alpha_2 \text{Re } e^{i(t_2-t_1)}Y_2 Y_1^* \\
&= \alpha_1^2 B_1 + \alpha_2^2 B_2 + 2\alpha_1 \alpha_2 \text{Re} \sqrt{B_2} C \sqrt{B_1},
\end{align*}
where $C$ is a contraction and $\tr \sqrt{B_2} C \sqrt{B_1} = e^{i(t_2-t_1)}x_1^* x_2$.
Note that $C$ is a contraction if and only if the largest eigenvalue
of $CC^*$ is bounded by 1, which is equivalent to the inequalities:
$$\tr(CC^*) \le 1 + \det(CC^*) = 1 + |\det(C)|^2 \le 2.$$
Suppose the first inequality is a strict inequality.
Consider the subspace
$$\cS = \{X \in M_2: \text{Re} \sqrt{B_2} X \sqrt{B_1} = 0, \tr (\sqrt{B_2}X\sqrt{B_1}) = 0\}
\subseteq M_2.$$
Then we may replace $C$ by $C+X$ with $X \in \cS$ so that
$\|C+X\| = 1$, and the new solution $C$ will satisfy
the equality $\tr(CC^*) = 1 + \det(CC^*)$.

Conversely, suppose there exists $C$ satisfying condition (3).
Write $\sqrt{B_1} = [y_{11}\, y_{12}]$, $\sqrt{B_2}C = [y_{21} \, y_{22}]$, and
$\sqrt{B_2}\sqrt{(I-CC^*)} = [y_{23} \, y_{24}]$.  Then the inner product of the two unit vectors $e^{it_1}x_1$ and $e^{it_2} x_2$ equals that of the unit vectors  $\begin{bmatrix} y_{11}\cr y_{12}\cr 0 \cr 0  \end{bmatrix}$ and $\begin{bmatrix} y_{21} \cr y_{22} \cr y_{23} \cr y_{24} \end{bmatrix}$.  Thus, there is a unitary $U \in M_8$ such that
$$U\begin{bmatrix}e^{it_1}x_1 & e^{it_2}x_2\cr 0_6 & 0_6\cr\end{bmatrix} =
\begin{bmatrix} y_{11} & y_{21} \cr y_{12} & y_{22} \cr 0_2 & y_{23} \cr 0_2 & y_{24}\cr\end{bmatrix}.$$
Let the first two rows of $U^*$ be $[F_1^* \, F_2^* \, F_3^* \, F_4^*]$.
Then the map $T(X) = \sum_{j=1}^4 F_jXF_j^*$ satisfies
$T(A_i) = B_i$ for $i = 1,2,3$.
\end{proof}

Note that condition \eqref{eq2} can be verified with standard software. In fact, if we treat C as an unknown
matrix with 4 complex variables (that is, 8 real variables), then the last two equations translate to 5 independent real linear equations.
By elementary linear algebra, the solution has the form $C = C_0 + x_1C_1 + x_2C_2 + x_3 C_3$
for 4 complex matrices $C_0, C_1, C_2, C_3$ in $M_2$, and 3 real variables $x_1, x_2, x_3$. Then we can substitute this expression
into the first equation to see whether the first nonlinear equation (of degree two) is solvable.    In fact, we can formulate
the first equation as an inequality: $\tr(CC^*) \le 1 + |\det(C)|^2 \le 2$.  Then standard computer optimization packages can decide whether there exist real numbers $x_1, x_2, x_3$ satisfying the inequalities.

\section{General states to general states}
\subsection{Moving beyond qubits}

A natural question is whether or not Theorem \ref{T:qubit} can be generalized to non-qubit states, i.e. states on $M_n$ where $n > 2$.
The equivalence of (a) and (b) in Theorem \ref{T:qubit} does not hold for density matrices with dimension greater than two (a counter-example may be found in \cite{HJRW}).  On the other hand, it is known (see \cite[Lemma 1]{CJW}, for example) that the equivalence of (a) and (c) holds for density matrices of any dimension---provided the initial states $A_1$, $A_2$ are pure, i.e. have rank one.  (See the example below.)  This illustrates two points.  First, results for states of arbitrary dimension appear to be more readily attainable when the inputs are restricted to be pure.
Second, this shows why the situation is easier for qubit states: for qubits, one can always perform the reduction described before Theorem \ref{T:qubit} to reduce to the case where the input states are pure, whereas this cannot be done in general for non-qubit states.


\medskip\noindent
\textbf{Example.}
Note that $\|\sqrt {A_1} \sqrt{A_2}\|_1
\le \|\sqrt {B_1} \sqrt{B_2}\|_1$ does not imply
$\|A_1 - tA_2\|_1 \ge \|B_1 -tB_1\|_1$ for all $t \ge 0$
if $A_1$ and $A_2$ are not of rank one.
For example,  let $A_1=\diag(4/5,1/5),A_2=\diag(1/3,2/3)$ and
 $$B_1=\left[\begin{array}{cc}
       1/4&\sqrt{3}/4\\
       \sqrt{3}/4&3/4\end{array}\right],~~~~
       B_2=\left[\begin{array}{cc}
       1/2&1/2\\
       1/2&1/2\end{array}\right].$$
 Then $$\|\sqrt{A_1}\sqrt{A_2}\|_1=0.8815<0.9659=\|\sqrt{B_1}\sqrt{B_2}\|_1$$
 while $$\|A_1-5A_2\|_1=4   <4.1641=\|B_1-5B_2\|_1.$$

So, what more can be said if we impose the additional restriction that the initial states are pure?
Well, if we also assume that the final states are pure, we have the following interesting result from \cite[Theorem 7]{CJW}.

\begin{theorem}\label{T:Chefles}
Let $x_i \in \IC^n$ and $y_i \in \IC^m$ be unit vectors for $i = 1, \dots, k$.  Let $X = [x_1|\dots|x_k]$ and
$Y=[y_1|\dots|y_k]$.  Then there exists a TPCP map $T$ such that $T(x_i x_i^*)=y_i y_i^*$, $i = 1, \dots, k$ if and only if $X^* X = M \circ Y^* Y$ for some correlation matrix $M \in M_k$.
\end{theorem}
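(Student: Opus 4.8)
The plan is to phrase both directions in terms of the operator-sum representation and reduce everything to a statement about Gram matrices, exactly the objects $X^*X$ and $Y^*Y$.

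For the forward direction, suppose $T(W) = \sum_{j=1}^r F_j W F_j^*$ is TPCP with $\sum_j F_j^* F_j = I_n$ and $T(x_i x_i^*) = y_i y_i^*$. Writing $w_{ij} = F_j x_i$, the hypothesis gives $\sum_j w_{ij} w_{ij}^* = y_i y_i^*$, a rank-one positive semidefinite matrix; since the range of the left-hand side is the span of the $w_{ij}$ over $j$, each $w_{ij}$ must be a scalar multiple of $y_i$, say $w_{ij} = \mu_{ij} y_i$. I would then define $M \in M_k$ by $M_{il} = \sum_j \overline{\mu_{ij}} \mu_{lj}$. This $M$ is a Gram matrix, hence positive semidefinite, and its diagonal entries $\sum_j |\mu_{ij}|^2$ equal $1$ because $y_i$ is a unit vector and $\sum_j w_{ij} w_{ij}^* = y_i y_i^*$; thus $M$ is a correlation matrix. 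Finally, the trace-preserving condition yields, for each $i, l$, the identity $x_i^* x_l = x_i^* \big( \sum_j F_j^* F_j \big) x_l = \sum_j w_{ij}^* w_{lj} = \big(\sum_j \overline{\mu_{ij}} \mu_{lj}\big) (y_i^* y_l) = M_{il} (Y^*Y)_{il}$, which is precisely $X^*X = M \circ Y^*Y$.

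For the converse, given a correlation matrix $M$ with $X^*X = M \circ Y^*Y$, I would first factor $M = V^* V$, writing $\mu_{ij}$ for the $(j,i)$ entry of $V$ (so the $i$-th column $v_i$ of $V$ is a unit vector with $M_{il} = v_i^* v_l = \sum_j \overline{\mu_{ij}} \mu_{lj}$). The idea is to build a Stinespring-type isometry. Letting $e_1, \dots, e_r$ be the standard basis of $\IC^r$, define a linear map on $\text{span}\{x_1, \dots, x_k\}$ by $x_i \mapsto \sum_j e_j \otimes \mu_{ij} y_i \in \IC^r \otimes \IC^m$. The key computation is that this map preserves inner products: $\big(\sum_j e_j \otimes \mu_{ij} y_i\big)^* \big(\sum_j e_j \otimes \mu_{lj} y_l\big) = \sum_j \overline{\mu_{ij}} \mu_{lj} (y_i^* y_l) = M_{il} (Y^*Y)_{il} = (X^*X)_{il} = x_i^* x_l$, where the hypothesis is exactly what is needed. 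Hence the map extends to a genuine isometry $U : \IC^n \to \IC^r \otimes \IC^m$ (enlarging $r$ if necessary so that $rm \ge n$). Setting $F_j = (e_j^* \otimes I_m) U$, the relation $U^* U = I_n$ becomes $\sum_j F_j^* F_j = I_n$, so $T(W) = \sum_j F_j W F_j^*$ is TPCP, and $F_j x_i = \mu_{ij} y_i$ gives $T(x_i x_i^*) = \sum_j |\mu_{ij}|^2 y_i y_i^* = M_{ii} y_i y_i^* = y_i y_i^*$.

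I expect the main obstacle to be the converse construction — specifically, recognizing that the correct object to build is the isometry $U$ whose images have a Gram matrix encoding the Hadamard product $M \circ Y^*Y$, and verifying that the inner-product-preservation identity matches the hypothesis term by term. The rank-one reduction $w_{ij} = \mu_{ij} y_i$ in the forward direction, and the dimension bookkeeping needed to extend $U$ to all of $\IC^n$ (padding $V$ with zero rows to increase $r$ without changing $M = V^*V$), are the remaining points requiring care; both connect naturally to the purification picture recalled in the notation section.
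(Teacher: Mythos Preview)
Your proof is correct and uses essentially the same ideas as the paper: the forward direction via the rank-one reduction $F_j x_i = \mu_{ij} y_i$ and the Gram identity from $\sum_j F_j^*F_j = I_n$, and the converse via building a Stinespring isometry $x_i \mapsto v_i \otimes y_i$ from a factorization $M = V^*V$. The only difference is packaging: the paper does not prove this theorem directly but obtains it as the rank-one specialization of its general mixed-to-mixed result (Corollary~\ref{3.3}, conditions (a)$\Leftrightarrow$(c)$\Leftrightarrow$(d)), whereas you give a clean self-contained argument---but the underlying mechanics (and in particular condition (d) there, which is exactly your isometry $U$) coincide.
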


Note this gives a computationally efficient condition to check if the matrix $Y^* Y$ has no zero entries.  We will use this result as a model to generalize in the rest of the paper, considering the most general situation first in the next subsection (where we obtain a result which allows us to derive the above theorem as a special case), and then, in the subsequent subsection, we consider keeping pure input states, but relax the condition that the final states be pure.  The final section examines how this theorem changes when the maps are not necessarily trace-preserving.

\subsection{Mixed states to mixed states}

In this subsection we consider the difficult problem of characterizing TPCP maps sending $k$ initial states to $k$ final states (not necessarily of the same dimension), starting with the general case, and then considering special cases that are more tractable.
The following theorem is rather technical, but it does provide a useful framework for the most general situation, and can be readily applied to quickly derive existing results under more specialized circumstances.
The multiple equivalent conditions reflect various approaches and serve as a segue between different viewpoints and lines of attack on a problem.
Note that we ignore zero eigenvalues when using the spectral decomposition in the theorem's statement so as to eliminate redundancies, thus preventing matrices from becoming artificially large.


\begin{theorem}
Suppose $A_1, \dots, A_k \in M_n$ and $B_1, \dots, B_k \in M_m$ are
density matrices.  Using the spectral decomposition, for each $i = 1, \dots, k$, we may write $A_i = X_iD_i^2X_i^*$ and
$B_i = Y_i\tilde D_i^2Y_i^*$, where $X_i, Y_i$ are partial isometries, and $D_i \in M_{r_i}$, $\tilde D_i \in M_{s_i}$
are diagonal matrices whose diagonal entries are given by the square roots of the positive eigenvalues of $A_i$, $B_i$ respectively.
The following conditions are equivalent.

\medskip
{\rm (a)} There is a TPCP map $T:M_n \rightarrow M_m$
such that $T(A_i) = B_i$ for $i = 1, \dots, k$.

{\rm (b)} For each $i = 1, \dots, k$ and $j = 1, \dots, r_i$, there are $s_i \times s$ matrices $V_{ij}$ such that
$$\sum_{j=1}^{r_i} V_{ij}V_{ij}^* = I_{s_i}$$
and the $(p,q)$ entry of the $r_i\times r_j$ matrix
$(D_iX_i^*X_j D_j)$ equals  $\tr(V_{ip}^*\tilde D_i^*Y_i^* Y_j\tilde D_jV_{jq})$.

{\rm (c)}
There are vectors $x_i = \begin{bmatrix} x_{1i} \cr \vdots \cr x_{ri} \end{bmatrix} \in (\IC^n)^r$
and vectors $y_{ji} = \begin{bmatrix} y_{ji}^1 \cr \vdots \cr y_{ji}^s \end{bmatrix} \in (\IC^m)^s$
for $i = 1, \dots, k$ and $j=1, \dots, r$ such that $A_i = \sum_{j=1}^r x_{ji}x_{ji}^*$,
$B_i = \sum_{j=1}^r \sum_{t=1}^s y_{ji}^t (y_{ji}^t)^*$ and
there is a unitary $U \in M_{ms}$ satisfying
$$U \begin{bmatrix} x_{11}& \cdots &x_{r1}& x_{12}&\cdots & x_{rk}\cr 0_{ms-n} & \cdots& 0_{ms-n} & 0_{ms-n} & \cdots & 0_{ms-n} \end{bmatrix}
= \begin{bmatrix}y_{11}^1&\cdots & y_{rk}^1\cr
    \vdots&\vdots\ \vdots \  \vdots &\vdots\cr
       y_{11}^s& \cdots & y_{rk}^s \end{bmatrix}.$$
\end{theorem}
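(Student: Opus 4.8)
**The plan is to establish the equivalences by proving (a) $\Rightarrow$ (c) $\Rightarrow$ (b) $\Rightarrow$ (a), treating (c) as the concrete geometric bridge and (b) as the algebraic reformulation.** The central organizing principle is the following characterization, which underlies all three conditions: a TPCP map $T$ with $T(A_i) = B_i$ exists if and only if one can find purifications that fit together compatibly. More precisely, writing $T(X) = \sum_{j=1}^r F_j X F_j^*$ with $\sum_j F_j^* F_j = I_n$, the key observation is that stacking the Kraus operators into a single isometry $F = \begin{bmatrix} F_1 \cr \vdots \cr F_r \end{bmatrix}$ (which satisfies $F^*F = I_n$) lets us encode the action of $T$ on each $A_i$ as a partial-trace relation. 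I would make the standard reduction: if $A_i = \sum_{j=1}^{r_i} x_{ji} x_{ji}^*$ is any decomposition into (unnormalized) rank-one pieces and we set $y_{ji} = F x_{ji} \in \IC^{mr}$, then $T(A_i) = \sum_j (\text{partial trace of } y_{ji} y_{ji}^*)$, so the $B_i$ are recovered as partial traces. This is exactly the content of (c), once we identify the index $r$ with $\max_i r_i$ (padding with zero vectors) and reshape $y_{ji}$ into its $s$ blocks $y_{ji}^t \in \IC^m$.

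\textbf{For (a) $\Rightarrow$ (c)}, I would take the Kraus decomposition of $T$ and the given spectral decompositions $A_i = \sum_j x_{ji} x_{ji}^*$ with $x_{ji}$ the scaled eigenvectors (columns of $X_i D_i$). Applying $F$ to each $x_{ji}$ produces vectors whose rank-one sums yield $B_i$; the existence of the unitary $U$ implementing the block relation follows because an isometry $F: \IC^n \to \IC^{mr}$ can always be dilated to a unitary on $\IC^{ms}$ (after suitable padding to make the ambient dimension $ms$), and this unitary maps the padded input matrix to the output matrix of $y$-blocks. The main technical care here is bookkeeping: ensuring the dimensions $r$ and $s$ are chosen uniformly (as $\max_i r_i$ and enough to accommodate $mr$ or the dilation), and that the zero-padding in the bottom blocks of the input matrix is consistent with $F$ being an isometry, i.e. that column norms and inner products are preserved.

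\textbf{For (c) $\Rightarrow$ (a)}, I would reverse this: given the unitary $U$, read its first $n$ rows (or rather the appropriate block structure) to extract Kraus operators $F_1, \dots, F_s$ exactly as in the proof of Theorem~\ref{T:qubit}, where the isometry condition on $U$ guarantees $\sum_j F_j^* F_j = I_n$ and the block equation guarantees $T(A_i) = B_i$. The passage between (b) and (c) is the linear-algebraic heart: condition (b) is precisely the statement that the inner products of the $y$-vectors (which are determined by the $V_{ij}$ through $y_{ji}^{\phantom{t}} = \tilde D_i Y_i^* \cdot (\text{columns of } V_{ij})$-type relations) match the inner products of the padded $x$-vectors, namely the entries of $D_i X_i^* X_j D_j$. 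I would show (b) says the Gram matrix of the outputs equals the Gram matrix of the inputs on the relevant block, which is exactly the condition for the unitary $U$ in (c) to exist (two tuples of vectors are related by a unitary iff they have the same Gram matrix). The trace-preservation/isometry conditions $\sum_j V_{ij} V_{ij}^* = I_{s_i}$ encode that each $y$-decomposition genuinely purifies $B_i$.

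\textbf{The main obstacle} I anticipate is the index and dimension management in matching (b) with (c): condition (b) phrases everything in terms of the compressed spectral data ($r_i$, $s_i$, partial isometries $X_i, Y_i$, the matrices $V_{ij}$), while (c) works with a uniform padded ambient space ($\IC^{ms}$) and explicit vectors. Reconciling these—showing that the $V_{ij}$ of (b) are exactly the data needed to build the $y_{ji}^t$ blocks of (c), and that the trace identity $\tr(V_{ip}^* \tilde D_i^* Y_i^* Y_j \tilde D_j V_{jq})$ equals the inner product $\langle y_{\cdot i}, y_{\cdot j}\rangle$ while $(D_i X_i^* X_j D_j)_{pq} = \langle x_{pi}, x_{qj}\rangle$—requires careful tracking of which index runs over eigenvalues, which over Kraus terms, and which over the purifying system. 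The conceptual content (Gram matrix equality $\Leftrightarrow$ unitary intertwiner $\Leftrightarrow$ compatible purifications) is clean; the difficulty is purely in verifying that the three notational frameworks describe the same Gram-matrix condition.
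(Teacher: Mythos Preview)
Your proposal is correct and uses the same ingredients as the paper's proof---Stinespring dilation (stacking Kraus operators into an isometry and extending to a unitary), the equivalence of Gram-matrix equality with the existence of a unitary intertwiner, and the factorization of each $y$-block through $Y_i\tilde D_i$ to produce the $V_{ij}$. The only real difference is organizational: the paper runs the clean cycle (a) $\Rightarrow$ (b) $\Rightarrow$ (c) $\Rightarrow$ (a), whereas you establish (a) $\Leftrightarrow$ (c) directly and then argue (b) $\Leftrightarrow$ (c) via Gram matrices. For their (a) $\Rightarrow$ (b), the authors use the Loewner-order observation that $AA^*\le BB^*$ iff $A=BC$ for a contraction $C$, which immediately yields the vectors $c_{ij}^l$ with $F_lX_iD_ie_j=Y_i\tilde D_i c_{ij}^l$ and hence the $V_{ij}$; your route recovers the same $V_{ij}$ one step later, by first landing in (c) and then factoring the $y$-blocks through the range of $Y_i\tilde D_i$. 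Either ordering works; the paper's cycle is slightly tidier because (b) $\Rightarrow$ (c) only requires \emph{constructing} vectors from the spectral data, whereas your (c) $\Rightarrow$ (b) must also note that an arbitrary decomposition $B_i=\sum_{j,t}y_{ji}^t(y_{ji}^t)^*$ forces each $y_{ji}^t$ into the range of $B_i$. One small slip: the $y$-vectors should be $Y_i\tilde D_i\cdot(\text{columns of }V_{ij})$, not $\tilde D_iY_i^*\cdot(\ldots)$ as you wrote.
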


\begin{proof}
(a) $\Rightarrow$ (b).  Let $e_i$ denote the vector with 1 in the $i$th position and 0 in the other positions.  Note that $AA^* \leq BB^*$ in the Loewner order (that is, $BB^* - AA^*$ is positive semidefinite) if and only if $A=BC$ for some contraction $C$.  As in equation (\ref{E:cp}), we may use the operator sum representation for a TPCP map to write $T(A_i) = \sum_{l=1}^s F_l A_i F_l^*$ for some $m \times n$ matrices $F_l$.  Thus
$$(Y_i \tilde{D}_i) (Y_i \tilde{D}_i)^* = B_i = T(A_i) = \sum_{l=1}^s F_l X_i D_i^2 X_i^* F_l^* \geq (F_l X_i D_i e_j) (F_l X_i D_i e_j)^*$$
for any $i,j,l$, whence $F_l X_i D_i e_j = Y_i \tilde{D}_i c_{ij}^l$ for some vectors $c_{ij}^l \in \IC^{s_i}$. Let $V_{ij} = [c_{ij}^1 | \dots | c_{ij}^s]$.  Since $T(A_i) = B_i$ it follows that $\sum_{j=1}^{r_i} V_{ij} V_{ij}^* = I_{s_i}$.

The trace-preserving condition $\sum_{l=1}^s F_l^*F_l = I_n$ implies that there is a unitary matrix $U \in M_{ms}$ whose first $n$ columns are given by $\begin{bmatrix} F_1^* & \dots & F_s^* \end{bmatrix}^*$.  The rest of (b) follows by noting that the inner product of any two columns of the $ms \times (r_1 + \dots + r_k)$ matrix
$$X=\begin{bmatrix} X_1 D_1 & \dots & X_k D_k \\ 0_{ms-n} & \dots & 0_{ms-n} \end{bmatrix}$$
must equal the inner product of the corresponding two columns of $UX$.

\medskip
(b) $\Rightarrow$ (c). Let $r=\max_i r_i$.  Set $x_{ji} = X_i D_i e_j$ if $j \leq r_i$ and $x_{ji} = 0$ otherwise. Let $y_{ji}^l = Y_i \tilde{D}_i V_{ij} e_l$ for $l=1, \dots, s$ if $j \leq r_i$, and set $y_{ij}^l = 0$ otherwise. Then the summations to $A_i$ and $B_i$ are clearly satisfied.  Finally, the last condition of (b) implies that the inner product of $x_{pi}$ and $x_{qj}$ equals the inner product of $y_{pi}$ and $y_{qj}$, and this entails the existence of a unitary $U$ satisfying the final condition of (c).

\medskip
(c) $\Rightarrow$ (a).  Let $\begin{bmatrix} F_1^* & \dots & F_s^* \end{bmatrix}$ be the first $n$ rows of $U^*$, and define $T$ by $T(X) = \sum_{j=1}^s F_j X F_j^*$.  The result follows.
\end{proof}

The conditions (b) and (c) are not easy to check. It would be interesting to find more explicit
and computationally efficient conditions. Nonetheless,
one can use the above theorem to deduce Corollary 10 in \cite{CJW} for
TPCP maps from general states to pure states.

\begin{corollary}
Suppose $A_1, \dots, A_k \in M_n$ and $B_1 = yy_1^*, \dots, B_k = y_ky_k^* \in M_m$ are
density matrices.  For each $i = 1, \dots, k$, write $A_i = X_iD_i^2X_i^*$
such that $D_i \in M_{r_i}$ are diagonal matrices with positive diagonal entries.
The following conditions are equivalent.

\medskip
{\rm (a)} There is a TPCP map $T:M_n \rightarrow M_m$
such that $T(A_i) = B_i$ for $i = 1, \dots, k$.

{\rm (b)} For each $i = 1, \dots, k$ and $j = 1, \dots, r_i$, there are
vectors $v_{ij}$ such that $\sum_{j=1}^{r_i} v_{ij}^*v_{ij} = 1$ and
the $(p,q)$ entry of the $r_i\times r_j$ matrix
$(D_iX_i^*X_j D_j)$ equals  $v_{ip}^* v_{jq} y_i^* y_j$.

{\rm (c)}
For each $i = 1, \dots, k$ and $j = 1, \dots, r_i$, there are
vectors $v_{ij}$ such that $\sum_{j=1}^{r_i} v_{ij}^*v_{ij} = 1$ and a unitary $U$ satisfying
$$U  \begin{bmatrix} X_1D_1 & \cdots & X_kD_k \cr 0 & \dots & 0 \cr\end{bmatrix}
= \begin{bmatrix}v_{11} \otimes y_1 & \cdots & v_{1r_1}\otimes y_1 & \cdots
& v_{k1} \otimes y_k & \cdots & v_{kr_k}\otimes & y_k\end{bmatrix}.$$
\end{corollary}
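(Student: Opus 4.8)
The plan is to read off this corollary from the preceding theorem by specializing to the case where every target $B_i$ is pure. When $B_i = y_i y_i^*$ has rank one, its spectral decomposition $B_i = Y_i \tilde D_i^2 Y_i^*$ collapses: since $\tr B_i = 1$ forces the unique positive eigenvalue to equal $1$, we have $s_i = 1$, the block $\tilde D_i$ is the scalar $1$, and $Y_i = y_i$ is a unit vector. The initial data $A_i = X_i D_i^2 X_i^*$ are untouched, so I expect conditions (b) and (c) of the corollary to emerge once these substitutions are fed into the theorem.

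For condition (b), each matrix $V_{ij}$ of the theorem is now of size $1 \times s$ (a row vector), which I would write as $V_{ij} = v_{ij}^t$ for a column vector $v_{ij} \in \IC^s$. The normalization $\sum_j V_{ij} V_{ij}^* = I_{s_i}$ then becomes the scalar equation $\sum_j v_{ij}^* v_{ij} = 1$. For the entrywise condition, I substitute $\tilde D_i = 1$ and $Y_i = y_i$ into $\tr(V_{ip}^* \tilde D_i^* Y_i^* Y_j \tilde D_j V_{jq})$; since $Y_i^* Y_j = y_i^* y_j$ is a scalar, the cyclicity of the trace collapses this to $v_{ip}^* v_{jq}\, y_i^* y_j$, which is exactly the right-hand side in the corollary.

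For condition (c), I would trace the construction $y_{ji}^l = Y_i \tilde D_i V_{ij} e_l$ from the theorem's proof. With $Y_i = y_i$ and $\tilde D_i = 1$ the scalar $V_{ij} e_l = (V_{ij})_l$ simply multiplies $y_i$, so the stacked column $(y_{ji}^1, \dots, y_{ji}^s)^t$ is precisely $v_{ij} \otimes y_i$. Feeding these Kronecker columns into the unitary intertwining relation of the theorem yields the displayed identity of the corollary, while the reconstructions $A_i = \sum_j x_{ji} x_{ji}^*$ and $B_i = \sum_j \sum_l y_{ji}^l (y_{ji}^l)^*$ reduce to the asserted forms; the latter uses $\sum_j v_{ij}^* v_{ij} = 1$, since $B_i = \left(\sum_j v_{ij}^* v_{ij}\right) y_i y_i^* = y_i y_i^*$.

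The argument carries no new idea beyond this substitution, so the only real obstacle is keeping the conventions consistent: one must place the conjugates correctly when identifying the $1 \times s$ block $V_{ij}$ with the column $v_{ij}$ (plain transpose, not conjugate transpose, is what makes the trace land on $v_{ip}^* v_{jq}$ rather than its conjugate), and fix the ordering in $v_{ij} \otimes y_i$ to agree with the stacking of the $y_{ji}^l$. Once these are pinned down, the chain (a) $\Leftrightarrow$ (b) $\Leftrightarrow$ (c) of the corollary is an immediate transcription of the one in the theorem.
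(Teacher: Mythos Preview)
Your proposal is correct and follows exactly the route the paper intends: the paper gives no separate proof of this corollary, stating it directly as a specialization of the preceding theorem to the case where each $B_i$ has rank one, and your substitutions $s_i=1$, $\tilde D_i=1$, $Y_i=y_i$, $V_{ij}=v_{ij}^t$ carry out precisely that reduction. Your care with the transpose (rather than conjugate transpose) convention and the Kronecker ordering is well placed and makes the transcription go through cleanly.
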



When all $A_i$ and $B_i$ are of rank one, the above result reduces to the following
result.

\begin{corollary} \label{3.3}
Suppose $x_1, \dots, x_k\in \IC^n$ and $y_1, \dots, y_k \in \IC^m$
are unit vectors.
The following conditions are equivalent.

\medskip
{\rm (a)} There is a TPCP map $T:M_n \rightarrow M_m$
such that $T(x_ix_i^*) = y_iy_i^*$ for $i = 1, \dots, k$.

{\rm (b)} There exist
$D_p = \diag(v_{1p}, \dots, v_{kp})$ for $p = 1, \dots, s$
satisfying
$$\sum_{j=1}^s D_j^*D_j = I_k \quad \hbox{ and }
\quad (x_i^*x_j) =  \sum_{p=1}^s D_p^*(y_i^*y_j)D_p.$$

{\rm (c)} There is a correlation matrix $C \in M_k$ such that
$$(x_i^* x_j) = C \circ (y_i^* y_j).$$

{\rm (d)}
There are unit vectors $v_1, \dots, v_k \in \IC^s$ and a unitary $U\in M_{ms}$ such that
$$U \begin{bmatrix} x_{1}& \cdots &x_{k}\cr 0_{ms-n} & \cdots & 0_{ms-n} \end{bmatrix}
= \begin{bmatrix}v_1 \otimes y_1&\cdots & v_k \otimes y_k \cr\end{bmatrix}.$$
\end{corollary}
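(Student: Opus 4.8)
The plan is to obtain the equivalence of (a), (b), and (d) by specializing the preceding corollary to rank-one inputs, and then to splice in the new condition (c) through a Gram-matrix factorization. When $A_i = x_ix_i^*$ with $x_i$ a unit vector, the spectral data degenerate to $r_i = 1$, $X_i = x_i$, $D_i = [1]$. The preceding corollary's condition (b) then supplies a single vector $v_i = (v_{i1}, \dots, v_{is})^t \in \IC^s$ per state, subject to $\|v_i\| = 1$, with the lone entry of $D_iX_i^*X_jD_j = x_i^*x_j$ equal to $(v_i^*v_j)(y_i^*y_j)$; and its condition (c) becomes precisely condition (d) here, with $v_i \otimes y_i$ in the $i$th column. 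Collecting the scalars into $D_p = \diag(v_{1p}, \dots, v_{kp})$ rewrites the single-vector form as the diagonal-matrix form of (b), so (a), (b), and (d) are equivalent by direct specialization.

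It remains to fold in (c). The key observation is that, with $G = (y_i^*y_j)$ denoting the Gram matrix of the $y_i$, the $(i,j)$ entry of $\sum_{p=1}^s D_p^* G D_p$ is $\sum_{p=1}^s \bar v_{ip} v_{jp}\,(y_i^*y_j) = (v_i^*v_j)(y_i^*y_j)$. Thus the matrix identity in (b) asserts exactly that $(x_i^*x_j) = C \circ (y_i^*y_j)$ with $C = (v_i^*v_j) = V^*V$, where $V = [v_1 \,|\, \cdots \,|\, v_k]$; moreover the constraint $\sum_{p=1}^s D_p^* D_p = I_k$ is equivalent to $\|v_i\| = 1$ for all $i$, i.e.\ to $C$ having unit diagonal and hence being a correlation matrix. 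This yields (b) $\Rightarrow$ (c). Conversely, any correlation matrix $C$ is positive semidefinite and so factors as $C = V^*V$ with $V \in M_{s,k}$ (one may take $s = \rank C$, enlarging $s$ to ensure $ms \geq n$ if necessary), whose columns are unit vectors; defining $D_p = \diag(v_{1p}, \dots, v_{kp})$ then runs the computation in reverse to recover (b).

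The equivalence (a) $\Leftrightarrow$ (d) can also be seen directly, and this is where the only genuine care is required. Writing $T(X) = \sum_{l=1}^s F_l X F_l^*$ with $\sum_{l=1}^s F_l^*F_l = I_n$, the identity $y_iy_i^* = \sum_l (F_lx_i)(F_lx_i)^*$ forces each $F_lx_i$ to lie in the span of $y_i$, since the left-hand side has rank one; writing $F_lx_i = v_{il}y_i$ gives $\sum_l |v_{il}|^2 = 1$, so $v_i$ is a unit vector. Stacking the $F_l$ into $\begin{bmatrix} F_1^* & \cdots & F_s^* \end{bmatrix}^*$ produces an isometry $\IC^n \to \IC^{ms}$ mapping $x_i$ to $v_i \otimes y_i$, which extends to a unitary $U \in M_{ms}$ (possible once $ms \geq n$, arranged by enlarging $s$); conversely, reading the first $n$ rows of $U^*$ as $[F_1^* \,|\, \cdots \,|\, F_s^*]$ recovers the Kraus operators. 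The points needing attention are the rank-one forcing step, the normalization $\sum_l |v_{il}|^2 = 1$, and matching the tensor ordering of $v_i \otimes y_i$ to the $m$-row block partition used to extract the $F_l$; the remainder is routine linear algebra, and the genuinely new content is simply the recognition above that the diagonal-congruence sum is an entrywise Schur product by a correlation matrix.
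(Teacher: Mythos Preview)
Your argument is correct and follows the paper's intended route: the equivalence of (a), (b), and (d) is obtained exactly as the paper does, by specializing the preceding corollary to the rank-one case $r_i = s_i = 1$. The one place you add content is in linking (c) to the others. The paper simply remarks that (a) $\Leftrightarrow$ (c) is Theorem~\ref{T:Chefles} (the result quoted from \cite{CJW}), whereas you instead prove (b) $\Leftrightarrow$ (c) directly via the observation that $\sum_p D_p^* (y_i^*y_j) D_p = (v_i^*v_j)\circ(y_i^*y_j)$ with $C=(v_i^*v_j)=V^*V$ a correlation matrix. Your route is slightly more self-contained, since it does not invoke the \cite{CJW} result as a black box; the paper's route is shorter because that result has already been stated. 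Both are entirely valid, and your Schur-product identification is in fact the natural content underlying Theorem~\ref{T:Chefles} itself.
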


Note the equivalence of conditions (a) and (c) above is just Theorem \ref{T:Chefles}.

\subsection{Pure states to mixed states}

Next, we turn to TPCP maps sending pure states to possibly mixed states,
and give a number of necessary and sufficient conditions for their existence.
This problem was also considered in \cite{CJW} using the concept of multi-probabilistic
transformations. We instead rely on purifications of mixed states,
with the aim of generalizing Theorem \ref{T:Chefles}.

\medskip
\begin{theorem} \label{T:purify}
Suppose $x_1, \dots, x_k \in \IC^n$ are unit vectors and $B_1, \dots, B_k \in M_m$ are
density matrices.
Then there is a TPCP map $T$ such that $T(x_ix_i^*) = B_i$ for $i = 1, \dots, k$ if and only if there exist purifications $y_i$ of $B_i$ such that $X^* X = Y^* Y$, where $X=[x_1 | \dots | x_k]$ and $Y = [y_1 | \dots | y_k]$.

\end{theorem}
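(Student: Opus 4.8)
\textit{Proof proposal.} The plan is to use the operator-sum (Stinespring) picture and the single observation that the \emph{stacked} Kraus images of the input vectors are exactly a purification of the output states; the equality $X^*X = Y^*Y$ then records nothing more than the trace-preserving condition $\sum_l F_l^* F_l = I_n$.

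First I would prove necessity. Suppose such a TPCP map exists and write it as $T(X) = \sum_{l=1}^{s} F_l X F_l^*$ with $\sum_{l=1}^{s} F_l^* F_l = I_n$. For each $i$ form the stacked vector $y_i = \begin{bmatrix} F_1 x_i \\ \vdots \\ F_s x_i \end{bmatrix} \in (\IC^m)^s$. Two elementary checks finish this direction. The reduced density matrix obtained by tracing out the ancilla is $\sum_{l=1}^{s} (F_l x_i)(F_l x_i)^* = \sum_{l} F_l (x_i x_i^*) F_l^* = T(x_i x_i^*) = B_i$, so $y_i$ is a purification of $B_i$; and the trace-preserving condition gives $y_i^* y_j = \sum_{l} x_i^* F_l^* F_l x_j = x_i^*\bigl(\sum_l F_l^* F_l\bigr) x_j = x_i^* x_j$, i.e.\ $Y^*Y = X^*X$.

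For sufficiency, suppose purifications $y_i$ of $B_i$ are given with $X^*X = Y^*Y$. After padding the ancilla with zeros I may assume all the $y_i$ lie in a common space $\IC^m \otimes \IC^r \cong \IC^{mr}$, and, enlarging $r$ if needed, that $mr \ge n$. Since the two Gram matrices agree, the assignment $x_i \mapsto y_i$ is isometric on $\operatorname{span}\{x_i\}$, and the bound $mr \ge n$ lets me extend it to a genuine isometry $W : \IC^n \to \IC^{mr}$. Writing $W$ in stacked block form $W = \begin{bmatrix} F_1 \\ \vdots \\ F_r \end{bmatrix}$ with each $F_l \in M_{m,n}$, the identity $W^*W = I_n$ reads $\sum_{l=1}^{r} F_l^* F_l = I_n$, so $T(X) := \sum_{l=1}^{r} F_l X F_l^*$ is TPCP. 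Because $W x_i = y_i$, the blocks of $y_i$ are precisely the vectors $F_l x_i$, whence $T(x_i x_i^*) = \sum_l (F_l x_i)(F_l x_i)^*$ equals the reduced density matrix of $y_i y_i^*$, which is $B_i$ since $y_i$ purifies $B_i$. This gives $T(x_i x_i^*) = B_i$ for all $i$.

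I expect no genuinely hard step here: the whole argument rests on the dictionary ``stacked Kraus images $\leftrightarrow$ purification'' together with the standard fact that two tuples of vectors have the same Gram matrix if and only if some isometry carries one to the other. The only points demanding care are the bookkeeping ones in the sufficiency direction, namely choosing a common ancilla dimension $r$ for all the purifications and ensuring $mr \ge n$ so that the partial isometry on $\operatorname{span}\{x_i\}$ extends to a full isometry on $\IC^n$ (this extension is exactly what is needed for the trace-preserving identity $\sum_l F_l^* F_l = I_n$). This statement also sits as the rank-one-input specialization of the general theorem proved earlier in this section, and specializing the outputs $B_i$ to be pure recovers Theorem~\ref{T:Chefles}: writing each purification as $y_i = b_i \otimes w_i$ with $b_i$ a unit eigenvector of $B_i$ and $w_i$ a unit ancilla vector, the matrix $M_{ij} = w_i^* w_j$ is a correlation matrix and $Y^*Y = M \circ (b_i^* b_j)$.
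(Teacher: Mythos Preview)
Your proof is correct and follows essentially the same route as the paper's: in both directions the key identification is that the stacked vector of Kraus images $y_i = (F_l x_i)_l$ is a purification of $B_i$, and the Gram condition $X^*X = Y^*Y$ encodes exactly the trace-preserving identity $\sum_l F_l^* F_l = I_n$. The only cosmetic difference is that the paper pads $x_i$ with zeros and invokes a unitary $U \in M_{mr}$ with $U\tilde X = Y$, whereas you work directly with an isometry $W:\IC^n \to \IC^{mr}$; your explicit handling of the dimension bookkeeping (common ancilla, $mr \ge n$) is if anything slightly more careful than the paper's.
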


\begin{proof}
Suppose there is a TPCP map $T$ such that $T(x_i x_i^*) = B_i$.  Write $T(A) = \sum_{j=1}^r F_j A F_j^*$. Since $T$ is trace-preserving, $\begin{bmatrix} F_1^* & F_2^* & \dots & F_r^* \end{bmatrix}$ has orthonormal rows and can be extended to a unitary matrix $U^* \in M_{mr}$.  Define $y_i = U(x_i \oplus 0_{mr-n})$.
Write
\begin{equation}\label{E:purification}
y_i = \begin{bmatrix} y_{1i} \cr \vdots \cr y_{ri} \end{bmatrix} \in (\IC^m)^r, \quad \tilde{X} = \begin{bmatrix} x_1 & \cdots & x_k\cr 0_{mr-n} & \cdots & 0_{mr-n} \end{bmatrix}.
\end{equation}
Then $F_j x_i = y_{ji}$, and $B_i = T(x_i x_i^*) = \sum_{j=1}^r y_{ji}y_{ji}^*$, so $y_i$ is a purification of $B_i$.  Moreover $Y^* Y = (U \tilde{X})^* U  \tilde{X} = \tilde{X}^* \tilde{X} = X^* X$ as desired.
\medskip

Conversely, suppose we have purifications $y_i$ of $B_i$, written as in (\ref{E:purification}) with $B_i = \sum_{j=1}^r y_{ji}y_{ji}^*$.  If $Y^* Y = X^* X = \tilde{X}^* \tilde{X}$, then, since $Y$ and $\tilde{X}$ have the same dimensions, there exists a unitary $U$ such that $Y = U \tilde{X}$.  Write $U =
\begin{bmatrix} F_1 & * \\ \vdots & * \\ F_r & * \end{bmatrix}$ where each $F_i \in M_{mn}$.  Then the map $T$ defined by $T(A) = \sum_{j=1}^r F_j A F_j^*$ is a TPCP map sending $x_i x_i^*$ to $B_i$ for all $i$.
\end{proof}

\medskip\noindent
\textbf{Remark.} To make the similarity to Theorem \ref{T:Chefles} more apparent, note that both conditions in this theorem are equivalent to
\begin{equation}\label{E:Mcond}
X^* X = M \circ Y^* Y \; \text{ for some correlation matrix } M.
\end{equation}
Indeed, if (\ref{E:Mcond}) holds, write $M = C^* C$ and $C = [c_1 | \dots | c_k]$. Since $M_{ii} = 1$, $c_i$ is a unit vector. Let $\tilde{y}_i = c_i \otimes y_i$ and $\tilde{Y} = [\tilde{y}_1 | \dots | \tilde{y}_k]$.  Then $\tilde{y}_i$, $i=1,\dots,k$, are purifications of $B_i$ and $\tilde{Y}^* \tilde{Y} = X^* X$, so we have the second condition in the theorem. The reverse implication is trivial.
\medskip

One definition for the \emph{fidelity} between two states $A$ and $B$ is
$$F(A,B) = \|\sqrt{A} \sqrt{B}\|_1 = \sup \{|\tr \sqrt{A} \sqrt{B} V| : V \text{ is a contraction} \}.$$
It is known that a necessary (but not in general sufficient) condition for the existence of a TPCP map sending $A_1, \dots, A_k$ to $B_1, \dots, B_k$ is that \begin{equation}\label{E:fidelity}
F(B_i,B_j) \geq F(A_i,A_j) \text{ for all } 1 \leq i,j \leq k
\end{equation}
(see \cite[Lemma 1]{CJW}).  The corollary below allows us to deduce this fact immediately when the input states are pure (since $F(x_i x_i^*, x_j x_j^*) = |x_i^* x_j|$).  It also illustrates what missing information (namely, the partial isometries $V_i$) is needed in conjunction with \eqref{E:fidelity} to create a sufficient condition for the existence of a TPCP map.  Unfortunately, it is still not very computationally efficient.

\begin{corollary}
Suppose $x_1, \dots, x_k \in \IC^n$ are unit vectors and $B_1, \dots, B_k \in M_m$ are
density matrices.
Then there is a TPCP map $T$ such that $T(x_ix_i^*) = B_i$ for $i = 1, \dots, k$ if and only if there exist partial isometries $V_i \in M_{mr}$ such that \begin{equation}\label{E:isom}
\sqrt{B_i} V_i V_i^* \sqrt{B_i} = B_i \quad \text{ and } \quad x_i^* x_j = \tr \sqrt{B_i} \sqrt{B_j} V_j V_i^* \; \text{ for all } i,j.
\end{equation}
\end{corollary}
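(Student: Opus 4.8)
The plan is to deduce this corollary directly from Theorem \ref{T:purify} by translating the language of purifications into the language of square-root factorizations. Recall from the proof of Theorem \ref{T:purify} that a purification of $B_i$ may be recorded as a column vector $y_i \in (\IC^m)^r$ whose blocks $y_{1i}, \dots, y_{ri} \in \IC^m$ assemble into an $m \times r$ matrix $Y_i = [y_{1i} \mid \cdots \mid y_{ri}]$; the purification condition is precisely $B_i = \sum_{l=1}^r y_{li} y_{li}^* = Y_i Y_i^*$, and the inner product of two such vectors is $y_i^* y_j = \sum_l y_{li}^* y_{lj} = \tr Y_i^* Y_j$. Hence Theorem \ref{T:purify} says: a TPCP map exists if and only if there are matrices $Y_i \in M_{mr}$ with $B_i = Y_i Y_i^*$ and $x_i^* x_j = \tr Y_i^* Y_j$ for all $i,j$. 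So the task reduces to showing that this factorization condition is equivalent to \eqref{E:isom}.

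The bridge in both directions is the identity $Y_i = \sqrt{B_i}\, V_i$. First I would verify the forward direction: given matrices $Y_i$ as above, I set $V_i = (\sqrt{B_i})^{+} Y_i$, where $(\sqrt{B_i})^{+}$ denotes the Moore--Penrose inverse. Since $Y_i Y_i^* = B_i$ forces the range of $Y_i$ to coincide with the range of $B_i$, and hence with the range of $\sqrt{B_i}$, one checks that $\sqrt{B_i}\, V_i = Y_i$ and that $V_i V_i^* = (\sqrt{B_i})^{+} B_i (\sqrt{B_i})^{+}$ equals the orthogonal projection onto the range of $\sqrt{B_i}$; in particular $V_i$ is a genuine partial isometry. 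Then $\sqrt{B_i} V_i V_i^* \sqrt{B_i} = \sqrt{B_i}\sqrt{B_i} = B_i$, which is the first equation of \eqref{E:isom}, while $x_i^* x_j = \tr Y_i^* Y_j = \tr V_i^* \sqrt{B_i}\sqrt{B_j} V_j = \tr \sqrt{B_i}\sqrt{B_j} V_j V_i^*$ by cyclicity of the trace, which is the second.

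For the converse, I would start from partial isometries $V_i$ satisfying \eqref{E:isom} and simply define $Y_i = \sqrt{B_i}\, V_i \in M_{mr}$. The first equation of \eqref{E:isom} is exactly $Y_i Y_i^* = B_i$, so the stacked vector $y_i$ associated to $Y_i$ is a purification of $B_i$, and the second equation gives $\tr Y_i^* Y_j = x_i^* x_j$, that is, $Y^* Y = X^* X$. Theorem \ref{T:purify} then supplies the desired TPCP map.

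The only delicate point, and the step I would watch most carefully, is the partial-isometry requirement. The Loewner-order factorization fact noted earlier ($AA^* \le BB^*$ if and only if $A = BC$ for some contraction $C$) only guarantees a factorization $Y_i = \sqrt{B_i}\, C_i$ with $C_i$ a contraction, which is weaker than what the corollary asserts. Defining $V_i$ canonically through the Moore--Penrose inverse is what upgrades the contraction to an honest partial isometry (with $V_i V_i^*$ a projection), and I would be sure to invoke the range identity $\mathrm{range}(Y_i) = \mathrm{range}(\sqrt{B_i})$ to justify both $\sqrt{B_i}\, V_i = Y_i$ and the projection computation.
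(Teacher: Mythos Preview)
Your proposal is correct and follows essentially the same approach as the paper: both directions go through Theorem \ref{T:purify} via the factorization $Y_i = \sqrt{B_i}\,V_i$, with the purification condition $Y_iY_i^* = B_i$ matching the first equation of \eqref{E:isom} and the Gram condition $x_i^*x_j = \tr Y_i^*Y_j$ matching the second. The only difference is that where the paper simply asserts the existence of partial isometries $V_i$ with $Y_i = \sqrt{B_i}\,V_i$, you construct them explicitly as $V_i = (\sqrt{B_i})^{+}Y_i$ and verify the projection identity $V_iV_i^* = (\sqrt{B_i})^{+}B_i(\sqrt{B_i})^{+}$ --- a welcome clarification of a step the paper leaves implicit.
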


\begin{proof}
Suppose $V_1, \dots, V_k$ are partial isometries satisfying \eqref{E:isom}.  Let $Y_i = \sqrt{B_i} V_i \in M_{mr}$, write $Y_i = [y_{1i} | \dots | y_{ri}]$, and define $y_i \in \IC^{rm}$ as in \eqref{E:purification}.  Then $B_i = Y_i Y_i^* = \sum_{j=1}^r y_{ji} y_{ji}^*$, so $y_i$ is a purification of $B_i$.  Since $X^* X = Y^* Y$ for $X=[x_1|\dots|x_k]$ and $Y=[y_1|\dots|y_k]$, the result follows by Theorem \ref{T:purify}.

Conversely, by Theorem \ref{T:purify}, we may assume there are purifications $y_i$ of $B_i$ in the form of \eqref{E:purification} and $X^* X = Y^* Y$.  Let $Y_i = [y_{1i}| \dots | y_{ri}] \in M_{mr}$.  Since $Y_i Y_i^* = B_i$, there exist partial isometries $V_i \in M_{mr}$ such that $Y_i = \sqrt{B_i} V_i$.  But $x_i^* x_j = \tr Y_i^* Y_j$, so \eqref{E:isom} holds.
\end{proof}

\section{General physical transformations on pure states}

Theorem \ref{T:Chefles} (quoted from \cite{CJW}) gives
a simple criterion for the existence of a TPCP map sending pure states
$x_1 x_1^*, \dots, x_k x_k^*$ to pure states $y_1 y_1^*, \dots, y_k y_k^*$.
One might wonder how to generalize this criterion to handle arbitrary interpolating completely positive (CP) maps.
The remark in \cite{CJW} after Theorem 7 seems to assert that there exists a CP
map sending $x_i x_i^*$ to $y_i y_i^*$ if and only if $X^*X = M \circ Y^* Y$ for some positive semidefinite $M$ (without any restriction on the diagonal entries of $M$).  However, this condition
is neither necessary nor sufficient.

For example, let $\{e_1, e_2\}$ be the standard basis for $\IC^2$.
Take $x_1 = e_1$, $x_2 = (e_1 + e_2)/\sqrt{2}$ and $y_1 = e_1$, $y_2 = e_2$.
Then $M \circ Y^*Y = M \circ I$ is diagonal for any matrix $M$, but $X^* X$
has nonzero off-diagonal entries, so the condition is not satisfied.
Nonetheless, there is a CP map sending $x_i x_i^*$ to $y_i y_i^*$;
let $S \in M_2$ be such that $Sx_i = y_i$.  Then the CP map $T(A) = SAS^*$ works.

On the other hand, let $x_1 = x_2 = e_1$.  Let $y_1 = e_1$, $y_2 = 2e_1$.
Let $M = (e_1 + 0.5 e_2)(e_1 + 0.5 e_2)^*$.  Then $X^* X = M \circ Y^*Y$ is the matrix of all ones.
But clearly there is no map $T$ sending $e_1 e_1^*$ to both $e_1 e_1^*$ and $4 e_1 e_1^*$.

The following results consider interpolating CP maps and unital CP maps, generalizing Theorem \ref{T:Chefles}, and giving necessary and sufficient conditions in the same spirit as \cite{CJW}.

\begin{theorem}\label{P:cp}
Fix positive semi-definite rank-one matrices $x_i x_i^* \in M_n$ and $y_i y_i^* \in M_m$ for
$i=1 \dots k$.  Let $X = [x_1 | x_2 | \dots | x_k]$ and $Y=[y_1 | y_2 | \dots | y_k]$. Then there
exists a completely positive map $T$ such that $T(x_i x_i^*) = y_i y_i^*$ if and only if there
exists a positive semi-definite matrix $M \in M_k$ with
$M_{ii}=1$ such that $\ker X^*X \subseteq \ker M \circ (Y^*Y)$.
\end{theorem}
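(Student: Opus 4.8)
The plan is to reduce the whole statement to one linear-algebraic fact: for matrices $P,Q$ with the same number of columns, the equation $SP=Q$ is solvable for some matrix $S$ if and only if $\ker P\subseteq\ker Q$; here I also use that $\ker P^*P=\ker P$ for every $P$ (since $P^*Pv=0$ forces $\|Pv\|^2=0$). The first step is to rewrite the hypothesis on $M$ in this language. Because $M$ is positive semidefinite with $M_{ii}=1$, I factor $M=C^*C$ with $C=[c_1|\dots|c_k]$, so each $c_i$ is a unit vector and $M_{ij}=c_i^*c_j$. Setting $z_i=c_i\otimes y_i$ and $Z=[z_1|\dots|z_k]$, a direct computation gives $(M\circ Y^*Y)_{ij}=(c_i^*c_j)(y_i^*y_j)=z_i^*z_j$, i.e. $M\circ Y^*Y=Z^*Z$, and hence $\ker(M\circ Y^*Y)=\ker Z$. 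Thus the condition $\ker X^*X\subseteq\ker(M\circ Y^*Y)$ is exactly $\ker X\subseteq\ker Z$, which by the fact above is equivalent to the existence of a matrix $S$ with $Sx_i=c_i\otimes y_i$ for all $i$.

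For the backward implication I would take such an $S\in M_{rm,n}$ (where $r$ is the common length of the $c_i$) and cut it into $r$ stacked blocks $F_1,\dots,F_r\in M_{m,n}$ matching the block structure of $z_i=c_i\otimes y_i$; then $Sx_i=c_i\otimes y_i$ reads block-by-block as $F_jx_i=(c_i)_j\,y_i$. The map $T(A)=\sum_{j=1}^r F_jAF_j^*$ is completely positive, and $T(x_ix_i^*)=\sum_j|(c_i)_j|^2\,y_iy_i^*=\|c_i\|^2\,y_iy_i^*=y_iy_i^*$, since $c_i$ is a unit vector.

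For the forward implication, suppose $T(A)=\sum_{j=1}^r F_jAF_j^*$ satisfies $T(x_ix_i^*)=y_iy_i^*$. The key point is that the target is rank one: from $\sum_j(F_jx_i)(F_jx_i)^*=y_iy_i^*$ the range of the left-hand side lies in the span of $y_i$, which forces $F_jx_i=c_{ji}\,y_i$ for scalars $c_{ji}$; comparing the two sides then gives $\sum_j|c_{ji}|^2=1$ (using $y_i\neq 0$, which holds as $y_iy_i^*$ has rank one). Collecting $c_i=(c_{1i},\dots,c_{ri})^t$, the Gram matrix $M$ with $M_{ij}=c_i^*c_j$ is positive semidefinite with $M_{ii}=\|c_i\|^2=1$. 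Finally, for any $v\in\ker X$ one has $\sum_i v_ix_i=0$, so each block $\sum_i v_ic_{ji}y_i=F_j\big(\sum_i v_ix_i\big)=0$; hence $\sum_i v_iz_i=0$, i.e. $v\in\ker Z=\ker(M\circ Y^*Y)$, giving $\ker X^*X=\ker X\subseteq\ker(M\circ Y^*Y)$.

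The only step that genuinely requires care is this rank-one reduction in the forward direction: it is what turns the operators $F_j$ into the scalars $c_{ji}$ and simultaneously produces the normalization $M_{ii}=1$. Everything else is bookkeeping around the solvability criterion for $SX=Z$ and the factorization $M=C^*C$; in particular the two directions are genuinely inverse to one another, with the same block decomposition of $S$ linking $M$ to the Kraus operators $F_j$.
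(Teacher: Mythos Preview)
Your proof is correct and follows essentially the same approach as the paper's. The paper phrases the argument as a chain of equivalences using diagonal matrices $\Gamma_j$ with $(\Gamma_j)_{ii}=c_{ij}$ and the rowspace condition $\mathrm{row}\,Y\Gamma_j\subseteq\mathrm{row}\,X$, whereas you stack everything into a single matrix $Z=[c_1\otimes y_1\,|\,\dots\,|\,c_k\otimes y_k]$ and invoke the solvability criterion for $SX=Z$; but block-by-block these are the same equations, and both arguments hinge on the identical rank-one observation that $\sum_j(F_jx_i)(F_jx_i)^*=y_iy_i^*$ forces $F_jx_i=c_{ji}y_i$ with $\sum_j|c_{ji}|^2=1$.
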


\begin{proof}
There exists a completely positive map $T$ such that $T(x_i x_i^*) = y_i y_i^*$ if and only if
\begin{align*}
&\exists F_1, \dots, F_r \in M_{mn} \text{ such that } \sum_{j=1}^r F_j x_i x_i^* F_j^* = y_i y_i^* \quad \forall i = 1, \dots, k\\
\iff &\exists F_1, \dots, F_r \in M_{mn} \text{ and unit vectors } c_1, \dots, c_k \in \IC^r \text{ such that } F_j x_i = c_{ij} y_i \\
\iff &\exists F_j \in M_{mn}, \text{unit vectors } c_i \in \IC^r \text{ so that }
F_j X = Y \Gamma_j
\text{ where } \Gamma_j \text{ is diagonal with } (\Gamma_j)_{ii} = c_{ij}\\
\iff &\exists \text{ diagonal } \Gamma_j \in M_k \text{ with } \sum_{j=1}^r \Gamma_j \Gamma_j^* = I_k \text{ such that } \row Y \Gamma_j \subseteq \row X \; \forall j\\
&(\text{equivalently, } \ker X \subseteq \ker Y\Gamma_j \; \forall j, \text{ or } \ker X^*X \subseteq \ker \Gamma_j^* Y^* Y \Gamma_j \; \forall j)\\
\iff &\ker X^*X \subseteq \ker M \circ Y^*Y \text{ where } (M_t)_{ij} = 
(\overline{\Gamma}_t)_{ii} (\Gamma_t)_{jj}\\
&\text{ and } M = \sum_{t=1}^r M_t \text{ is a positive semi-definite matrix with } M_{ii} = 1
\end{align*}
\end{proof}

We will present a result on
unital completely positive maps sending pure states to pure states as a corollary
of the following more general result.
Recall that for a rank $r$ positive semi-definite matrix
$A \in M_n$ with spectral decomposition $A = \lambda_1 u_1 u_1^* + \cdots + \lambda_r u_r u_r^*$,
where $\{u_1, \dots, u_k\} \subset \IC^n$ is an orthonormal set of eigenvectors of $A$
corresponding to the positive eigenvalues $\lambda_1, \dots, \lambda_r$,
the Moore-Penrose inverse $A^+$ of $A$ has the spectral decomposition
$A^+ = \lambda_1^{-1} u_1 u_1^* + \cdots + \lambda_r^{-1} u_r u_r^*$.

\begin{theorem}\label{P:unital}
Fix rank-one matrices $x_i x_i^* \in M_n$ and $y_i y_i^* \in M_m$ for $i=1 \dots k$.  Fix a positive semi-definite matrix $B \in M_m$.  Let $X = [x_1 | x_2 | \dots | x_k]$ and $Y=[y_1 | y_2 | \dots | y_k]$. Then there exists a completely positive linear map $T: M_n \rightarrow M_m$ such that
$$T(I) = B \quad \hbox{ and } \quad
T(x_ix_i^*) = y_iy_i^* \quad \hbox{ for } i = 1, \dots, k,$$
if and only if there exists a positive semi-definite matrix $M \in M_k$ with $M_{ii} = 1$ such that
$$(1) \; \ker X^*X \subseteq \ker M \circ (Y^*Y) \quad \text{ and } \quad (2) \; Y[\bar{M} \circ (X^*X)^{+}]Y^* \leq B,$$
(with equality in (2) should $X$ have rank $n$).  Here $X^{+}$ denotes the Moore-Penrose inverse of $X$.
\end{theorem}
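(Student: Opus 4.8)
The plan is to build on the proof of Theorem \ref{P:cp}. A completely positive map has the form $T(A)=\sum_{j=1}^r F_jAF_j^*$ with $F_j\in M_{mn}$, and the argument there shows that $T(x_ix_i^*)=y_iy_i^*$ forces $F_jx_i=c_{ij}y_i$ for scalars $c_{ij}$ with $\sum_j|c_{ij}|^2=1$. Collecting these into the diagonal matrices $\Gamma_j=\diag(c_{1j},\dots,c_{kj})$ and the Gram matrix $M=C^*C$, where $C=[c_1|\dots|c_k]$ has $i$-th column $c_i=(c_{i1},\dots,c_{ir})^t$, the pure-state constraints read $F_jX=Y\Gamma_j$, the matrix $M$ is positive semidefinite with $M_{ii}=1$, and solvability of $F_jX=Y\Gamma_j$ is equivalent to condition (1) via the identity $M\circ(Y^*Y)=\sum_j\Gamma_j^*Y^*Y\Gamma_j$, whose kernel is $\bigcap_j\ker Y\Gamma_j$. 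Thus condition (1) is inherited verbatim, and the entire novelty lies in converting $T(I)=\sum_jF_jF_j^*=B$ into condition (2).

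The core step is to solve the constraint $F_jX=Y\Gamma_j$ explicitly on the range of $X$. Letting $P=XX^+$ be the orthogonal projection of $\IC^n$ onto $\text{range }X$, I would write $F_j=F_jP+F_j(I-P)=Y\Gamma_jX^++G_j$ with $G_j:=F_j(I-P)$, so that $G_jX=0$. The point is that $G_jX=0$ kills all cross terms: indeed $X^+G_j^*=(X^*X)^+X^*G_j^*=(X^*X)^+(G_jX)^*=0$. Hence, using $X^+(X^+)^*=(X^*X)^+$, one obtains
$$\textstyle\sum_jF_jF_j^*=Y\Big[\sum_j\Gamma_j(X^*X)^+\Gamma_j^*\Big]Y^*+\sum_jG_jG_j^*.$$
Evaluating the diagonal-matrix sandwich entrywise gives $\sum_j\Gamma_j(X^*X)^+\Gamma_j^*=\bar M\circ(X^*X)^+$ (the conjugate $\bar M$ arising from $\sum_t c_{it}\overline{c_{jt}}=\bar M_{ij}$, a point of conjugation to watch carefully), so $T(I)=Y[\bar M\circ(X^*X)^+]Y^*+\sum_jG_jG_j^*$.

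From this identity both directions of the equivalence follow. For necessity, $\sum_jG_jG_j^*\ge0$ yields $Y[\bar M\circ(X^*X)^+]Y^*\le B$ at once, and if $\rank X=n$ then $P=I$ forces every $G_j=0$, giving equality. For sufficiency I would reverse-engineer the Kraus operators: set $F_j^{(0)}=Y\Gamma_jX^+$, which satisfies $F_j^{(0)}X=Y\Gamma_j$ precisely because condition (1) gives $\ker X\subseteq\ker Y\Gamma_j$, and then append correction operators $G_j$ supported on $(\text{range }X)^\perp$ to account for $T(I)$.

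The \emph{main obstacle}---and the only place where the rank hypothesis genuinely enters---is realizing the positive semidefinite gap $\Delta=B-Y[\bar M\circ(X^*X)^+]Y^*$ as $\sum_jG_jG_j^*$ subject to $G_jX=0$. When $\rank X<n$ there is a unit vector $u\perp\text{range }X$ (equivalently $X^*u=0$), and for the spectral decomposition $\Delta=\sum_l\mu_lz_lz_l^*$ the operators $G_l=\sqrt{\mu_l}\,z_lu^*$ satisfy $G_lX=0$ and $\sum_lG_lG_l^*=\Delta$, so any admissible $\Delta\ge0$ is attainable (enlarging $r$ freely). When $\rank X=n$ no such $u$ exists, every $G_j$ must vanish, $\Delta$ is forced to be $0$, and this is exactly why equality in (2) is demanded in that case.
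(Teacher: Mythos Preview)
Your proposal is correct and follows essentially the same route as the paper's proof: both decompose $F_j$ along the projection $P=XX^{+}$, obtain the identity $\sum_jF_jF_j^*=Y[\bar M\circ(X^*X)^{+}]Y^*+\text{(remainder supported on }(\text{range }X)^\perp)$, and read off (2) and its equality case from there. The only cosmetic difference is in the sufficiency construction of the correction operators: the paper writes the gap as $EE^*$ and sets $G_jP^\perp G_j^*$ proportional to rank-one eigenprojections of $EE^*$, whereas you pick a single unit vector $u\perp\text{range }X$ and use $G_l=\sqrt{\mu_l}\,z_lu^*$; both achieve $G_jX=0$ and $\sum G_jG_j^*=\Delta$, so this is the same idea.
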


\begin{proof}
Note that the existence of a CP map $T$ such that $T(I)=B$ and $T(x_i x_i^*) = y_i y_i^*$ is equivalent to the existence of $F_1, \dots, F_r \in M_{mn}$ satisfying
$$(a) \; \sum_{j=1}^r F_j x_i x_i^* F_j^* = y_i y_i^* \quad \forall i \quad \text{ and } \quad (b) \; \sum_{j=1}^r F_j F_j^* = B.$$

Proof of Necessity: Condition (a) and the proof of Theorem \ref{P:cp} imply that $F_j X = Y \Gamma_j$ for some diagonal $\Gamma_j \in M_k$ with $\sum_{j=1}^r \Gamma_j \Gamma_j^* = I_k$.  Moreover condition (1) follows with the matrix $M$ defined by
$M_{ij} = \sum_{t=1}^r (\overline{\Gamma}_t)_{ii} (\Gamma_t)_{jj}$.
\medskip

Let $P$ denote the orthogonal projection $XX^{+}$, and let $P^{\perp} = I_n-P$.  Then $F_jP = F_j XX^{+} = Y \Gamma_j X^{+}$, so
\begin{align*}
B &= \sum_{j=1}^r F_j F_j^* = \sum_{j=1}^r (F_j P + F_j P^{\perp})(P F_j^* + P^{\perp} F_j^*) = \sum_{j=1}^r F_j P P F_j^* + F_j P^{\perp} F_j^* \\
&= \sum_{j=1}^r Y \Gamma_j X^{+} (X^{+})^* \Gamma_j^* Y^* + \sum_{j=1}^r F_j P^{\perp} F_j^* \qquad \text{ but } X^{+} (X^{+})^* = (X^*X)^{+}\\
&= Y [\bar{M} \circ (X^* X)^{+}] Y^* + \sum_{j=1}^r F_j P^{\perp} F_j^* \\
&\geq Y [\bar{M} \circ (X^* X)^{+}] Y^*
\end{align*}
with equality if $P=I_n$, that is, if $X$ has rank $n$.

\bigskip
Proof of Sufficiency: Since $M$ is positive semi-definite with $M_{ii}=1$, we can write $M=C^*C$ where $C=[c_1|c_2|\dots|c_k] \in M_{rk}$, and $c_i$ is a unit vector for all $i$.  If necessary, we may append extra zero entries to the end of each $c_i$ so that we may assume $r \geq m$.  Define diagonal matrices $\Gamma_t \in M_k$ by $(\Gamma_t)_{ii} = c_{it}$.   Then
$$M \circ Y^* Y = \sum_{j=1}^r \Gamma_j^* Y^* Y \Gamma_j, \qquad \bar{M} \circ (X^* X)^{+} = \sum_{j=1}^r \Gamma_j (X^*X)^{+} \Gamma_j^*, \quad \text{ and }
\quad \sum_{j=1}^r \Gamma_j \Gamma_j^* = I_k.$$

Condition (2) implies
\begin{align*}
B &= Y[\bar{M} \circ (X^*X)^{+}]Y^* + EE^* \qquad \text{ for some } E\\
&= \sum_{j=1}^r Y \Gamma_j (X^*X)^{+} \Gamma_j^* Y^* + \sum_{j=1}^r G_j P^{\perp} G_j^*
\end{align*}
where we may choose $G_j \in M_{mk}$ so that $G_j P^{\perp} G_j^*$ is proportional to an eigenprojection for $EE^*$ with rank at most one.  Note that $P^{\perp} = 0$ if and only if $X$ has rank $n$.

\medskip
Define $F_j = Y \Gamma_j X^{+} + G_j P^{\perp}$.  Then
\begin{align*}
\sum_{j=1}^r F_j F_j^* &= \sum_{j=1}^r Y \Gamma_j X^{+} (X^{+})^* \Gamma_j^* Y^* + G_j P^{\perp} G_j^* + Y \Gamma_j X^{+} P^{\perp} G_j^* + G_j P^{\perp} (X^{+})^* \Gamma_j^* Y^* \\
&= \sum_{j=1}^r Y \Gamma_j X^{+} (X^{+})^* \Gamma_j^* Y^* + G_j P^{\perp} G_j^* = B
\end{align*}
since $X^{+} P^{\perp} = X^{+} (I-XX^{+}) = 0$,
and the fourth term in the second sum is the adjoint of the third term.
\medskip

On the other hand
\begin{align*}
F_j X &= Y \Gamma_j (X^{+} X - I + I) + G_j P^{\perp} X \\
&= - Y \Gamma_j (I-X^{+} X) + Y \Gamma_j + G_j (I-XX^{+}) X.
\end{align*}
But $I-X^{+} X$ is the orthogonal projection onto $\ker X$; since condition (1) implies $\ker X \subseteq \ker Y \Gamma_j$ for all $j$, the first term must be 0.  And $(I-XX^{+})X = X - XX^{+}X = 0$, so the third term vanishes.  Thus $F_j X = Y \Gamma_j$ for all $j$, and
$$\sum_{j=1}^r F_j x_i x_i^* F_j^* = y_i y_i^* \quad \text{ for all } i = 1, \dots, k$$
as desired.
\end{proof}

\begin{corollary} \label{C:unital}
Fix $x_i x_i^* \in M_n$ and $y_i y_i^* \in M_m$ for $i=1, \dots, k$.  Write $X=[x_1|\dots|x_k]$ and $Y=[y_1|\dots|y_k]$.
Then there exists a unital completely positive map $T$ satisfying $T(x_i x_i^*) = y_i y_i^*$ for all $i = 1, \dots, k$
if and only if there exists a positive semi-definite matrix $M \in M_k$ with $M_{ii} = 1$ such that
$$(1) \; \ker X^*X \subseteq \ker M \circ (Y^*Y) \quad \text{ and } \quad (2) \; Y[\bar{M} \circ (X^*X)^{+}]Y^* \leq I_m,$$
(with equality in (2) should $X$ have rank $n$).
\end{corollary}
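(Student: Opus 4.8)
The plan is to obtain this corollary as an immediate specialization of Theorem \ref{P:unital}, taking the target matrix $B$ to be $I_m$. First I would recall that, for a CP map written in the operator sum form $T(X) = \sum_{j=1}^r F_j X F_j^*$, one has $T(I_n) = \sum_{j=1}^r F_j I_n F_j^* = \sum_{j=1}^r F_j F_j^*$. Hence the unital condition $\sum_{j=1}^r F_j F_j^* = I_m$ (as defined in the introduction) is literally the same as the requirement $T(I_n) = I_m$. Consequently, the existence of a unital CP map $T$ with $T(x_i x_i^*) = y_i y_i^*$ for all $i$ is nothing other than the existence of a CP map $T$ satisfying simultaneously $T(I_n) = I_m$ and $T(x_i x_i^*) = y_i y_i^*$ for $i = 1, \dots, k$.

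Next I would apply Theorem \ref{P:unital} with the specific choice $B = I_m$, which is permissible since $I_m$ is positive semi-definite. Substituting $B = I_m$ into the conclusion of that theorem produces precisely the two conditions asserted in the corollary, namely $\ker X^* X \subseteq \ker M \circ (Y^* Y)$ and $Y[\bar{M} \circ (X^* X)^{+}] Y^* \leq I_m$, together with the parenthetical sharpening to equality when $X$ has rank $n$. This completes the equivalence.

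I do not anticipate any real obstacle: all the substantive content — the forward construction of suitable Kraus operators $F_j = Y \Gamma_j X^{+} + G_j P^{\perp}$ from a correlation matrix $M$, and the reverse extraction of $M$ (and the diagonal matrices $\Gamma_j$) from given Kraus operators — has already been established in the proof of Theorem \ref{P:unital}. The only point to check for the corollary is the trivial identification of unitality with the constraint $T(I_n) = I_m$, after which the result is a direct reading of Theorem \ref{P:unital} at $B = I_m$.

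\begin{proof}
A CP map $T(X) = \sum_{j=1}^r F_j X F_j^*$ is unital precisely when $\sum_{j=1}^r F_j F_j^* = I_m$, which, since $T(I_n) = \sum_{j=1}^r F_j F_j^*$, is equivalent to $T(I_n) = I_m$. Thus a unital CP map satisfying $T(x_i x_i^*) = y_i y_i^*$ for all $i$ is exactly a CP map $T$ with $T(I_n) = I_m$ and $T(x_i x_i^*) = y_i y_i^*$ for $i = 1, \dots, k$. Applying Theorem \ref{P:unital} with $B = I_m$ (which is positive semi-definite) yields the stated conditions (1) and (2), with equality in (2) when $X$ has rank $n$.
\end{proof}
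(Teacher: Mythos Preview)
Your proposal is correct and matches the paper's own proof exactly: the paper simply writes ``Take $B=I_m$ in Theorem \ref{P:unital},'' and you have spelled out the (trivial) identification of unitality with $T(I_n)=I_m$ before making that substitution.
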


\begin{proof}
Take $B=I_m$ in Theorem \ref{P:unital}.
\end{proof}

\begin{corollary} Use the notation in Corollary \ref{C:unital}.
There is a unital TPCP map sending $x_1 x_1^*, \dots, x_k x_k^*$ to $y_1 y_1^*, \dots, y_k y_k^*$ if and only if
$m = n$ and
there exists a positive semi-definite matrix $M \in M_k$ with $M_{ii} = 1$ such that
$$(1) \; X^*X = M \circ (Y^*Y) \quad \text{ and }
\quad (2) \; Y[\bar{M} \circ (X^*X)^{+}]Y^* \leq I_n,$$
with equality in (2) should $X$ have rank $n$.
\end{corollary}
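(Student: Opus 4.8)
The plan is to read this corollary as the trace-preserving strengthening of Corollary~\ref{C:unital} (the $B = I_m$ case of Theorem~\ref{P:unital}): I would reuse the operator-sum bookkeeping from the proof of Theorem~\ref{P:unital} and additionally track the trace-preserving constraint $\sum_j F_j^*F_j = I_n$. For necessity, write $T(X) = \sum_j F_jXF_j^*$ with $\sum_j F_j^*F_j = I_n$ and $\sum_j F_jF_j^* = I_m$; comparing traces gives $n = \tr\sum_j F_j^*F_j = \tr\sum_j F_jF_j^* = m$, so $m = n$. The interpolation condition forces $F_jX = Y\Gamma_j$ for diagonal $\Gamma_j$ with $\sum_j \Gamma_j\Gamma_j^* = I_k$, and one forms the correlation matrix $M_{pq} = \sum_j \overline{(\Gamma_j)_{pp}}(\Gamma_j)_{qq}$, exactly as in Theorem~\ref{P:unital}. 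The one genuinely new point is that trace-preservation upgrades the kernel inclusion of Corollary~\ref{C:unital} to the \emph{equality} in (1): $X^*X = X^*(\sum_j F_j^*F_j)X = \sum_j (F_jX)^*(F_jX) = \sum_j \Gamma_j^*(Y^*Y)\Gamma_j = M\circ(Y^*Y)$. (Note this equality is precisely condition (c) of Corollary~\ref{3.3}, the criterion for an ordinary TPCP map.) Condition (2), together with its equality clause when $\rank X = n$, then follows verbatim from the necessity computation in Theorem~\ref{P:unital} with $B = I_n$.

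Sufficiency is where I expect the real obstacle, since the constructed map must be unital and trace-preserving and interpolating all at once. Given $M = C^*C$ with unit columns and associated diagonal $\Gamma_t$, I would start from the Kraus operators $\hat F_t = Y\Gamma_t X^+$. Condition (1) forces $\ker X \subseteq \ker Y\Gamma_t$, so $\hat F_tX = Y\Gamma_t$ and the $\hat F_t$ interpolate; a short computation then gives $\sum_t \hat F_t^*\hat F_t = (X^+)^*(X^*X)X^+ = XX^+ = P$ (the orthogonal projection onto the range of $X$) and $\sum_t \hat F_t\hat F_t^* = Y[\bar M\circ(X^*X)^+]Y^* =: W$. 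Thus the $\hat F_t$ alone miss trace-preservation by $P^\perp = I_n - P$ and miss unitality by $E_1 := I_n - W$, which is positive semidefinite by (2).

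The observation that unblocks the construction is that any further Kraus operators enter as \emph{separate} summands and hence produce no cross terms: if I adjoin operators $G_s$ with $G_s = G_sP^\perp$ (so $G_sX = 0$ and interpolation is untouched), then the total map is trace-preserving and unital exactly when $\sum_s G_s^*G_s = P^\perp$ and $\sum_s G_sG_s^* = E_1$. Such $G_s$ amount to a trace-preserving completely positive map from $M_d$, $d = \rank P^\perp = n - \rank X$, into $M_n$ sending $I_d$ to $E_1$, which exists if and only if $\tr E_1 = d$. I would verify this trace identity from condition (1): since $\sum_t \Gamma_t^*(Y^*Y)\Gamma_t = M\circ(Y^*Y) = X^*X$, cyclicity of the trace gives $\tr W = \tr((X^*X)^+\sum_t\Gamma_t^*(Y^*Y)\Gamma_t) = \tr((X^*X)^+(X^*X)) = \rank X$, hence $\tr E_1 = n - \rank X = d$. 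With the traces matched, explicit rank-one operators $G_{a,l} = \sqrt{\mu_l/d}\,w_lf_a^*$, built from a spectral decomposition $E_1 = \sum_l \mu_l w_lw_l^*$ and an orthonormal basis $\{f_a\}$ of the range of $P^\perp$, satisfy both required sums and complete the map. Finally, when $\rank X = n$ one has $P^\perp = 0$, $E_1 = 0$, no $G_s$ are needed, and (2) holds with equality, matching the stated clause.
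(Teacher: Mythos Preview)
The paper states this corollary without proof, presumably intending it as an immediate combination of Theorem~\ref{T:Chefles}/Corollary~\ref{3.3} (the TPCP criterion, which yields the equality in (1)) and Corollary~\ref{C:unital} (the unital criterion, which yields (2)). Your necessity argument is exactly this combination, carried out with the common correlation matrix $M$ built from the Kraus operators; the observation that $m=n$ follows from comparing $\tr\sum_j F_j^*F_j$ and $\tr\sum_j F_jF_j^*$ is correct.

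The sufficiency direction, however, is \emph{not} literally immediate from the two earlier results, since one must produce a single map that is simultaneously unital, trace-preserving, and interpolating; you correctly identify this as the substantive step and supply a clean construction that the paper omits. Your device of adjoining the correction operators $G_s$ as \emph{separate} Kraus summands (rather than perturbing each $\hat F_t$ as in the proof of Theorem~\ref{P:unital}) is a nice simplification: it kills all cross terms and reduces the problem to finding $G_s=G_sP^\perp$ with $\sum_s G_s^*G_s=P^\perp$ and $\sum_s G_sG_s^*=E_1$. The key enabling identity $\tr W=\rank X$ follows, as you note, from condition~(1) via $\sum_t\Gamma_t^*(Y^*Y)\Gamma_t=M\circ(Y^*Y)=X^*X$, and your explicit rank-one operators $G_{a,l}=\sqrt{\mu_l/d}\,w_lf_a^*$ do the job. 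The boundary case $\rank X=n$ is handled correctly: then $P^\perp=0$, $d=0$, $\tr E_1=0$ forces $E_1=0$, and (2) holds with equality. Your argument is complete and correct.
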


\medskip\noindent
\textbf{Note}

Reference \cite{HJRW} was brought to our attention by one of the referees.  In it the authors independently obtain our condition (c) in Theorem \ref{T:qubit}.  Moreover, they extend the result by allowing final states to have dimension greater than two, although it appears that our proof is self-contained, and uses more elementary methods.  They also consider the problem of \emph{approximately} mapping a set of initial states to a set of final states via CP maps.

\medskip\noindent
{\bf Acknowledgments}

This research was supported by an RCG grant with Sze as PI and Li as co-PI.
The grant supported the post-doctoral fellowship of Huang at the Hong Kong Polytechnic
University, and the visit of Poon to the University of Hong Kong and Hong Kong Polytechnic
University in the spring of 2012. Li was also supported by a USA NSF grant; he
was a visiting professor of the University of Hong Kong in the spring of 2012,
an honorary professor of Taiyuan University of Technology (100 Talent Program scholar),
and an honorary professor of the  Shanghai University.

The authors would like to thank Dr. J. Wu and Dr. L. Zhang
for drawing their attention to the papers \cite{AU} and \cite{CJW}; and thank
Dr. H.F. Chau, Dr. W.S. Cheung, Dr. C.H. Fung, and Dr. Z.D. Wang for helpful discussion.  The authors would also like to thank the referees and editors for their helpful comments to improve this paper.

\medskip\noindent
{\bf Addresses}

\noindent
Zejun Huang and Nung-Sing Sze\\
Department of Applied Mathematics, Hong Kong Polytechnic University, Hung Hom, Hong Kong. \\
huangzejun@yahoo.cn, raymond.sze@inet.polyu.edu.hk

\noindent
Chi-Kwong Li \\
Department of Mathematics, College of William and Mary, Williamsburg, VA 23187, USA. \\
ckli@math.wm.edu

\noindent
Edward Poon \\
Department of Mathematics,
 Embry-Riddle Aeronautical University,
Prescott, AZ 86301, USA.  \\
poon3de@erau.edu

\end{document}